\newcommand{\inv}{^{-1}}
\newcommand\tsup[2][2]{%
 \def\useanchorwidth{T}%
  \ifnum#1>1%
    \stackon[-.5pt]{\tsup[\numexpr#1-1\relax]{#2}}{\scriptscriptstyle\sim}%
  \else%
    \stackon[.5pt]{#2}{\scriptscriptstyle\sim}%
  \fi%
}
\newtheorem{thm}{Theorem}
\newtheorem{proposition}[thm]{Proposition}
\newtheorem{lemma}[thm]{Lemma}
\begin{document}

\title{
Long-term stability of driven quantum systems and the time-dependent Bloch equation}

\date{\today}
\author{Zsolt Szab\'o}
\email{zsolt.szabo@students.mq.edu.au}
\affiliation{School of Mathematical and Physical Sciences, Macquarie University, NSW 2109, Australia}
\affiliation{ARC Centre of Excellence for Engineered Quantum Systems, Macquarie University, NSW 2109, Australia}
\author{Kazuya Yuasa}
\affiliation{Department of Physics, Waseda University, Tokyo 169-8555, Japan}
\author{Daniel Burgarth}
\affiliation{Department Physik, Friedrich-Alexander-Universit\"at Erlangen-N\"urnberg, Staudtstra\ss e 7, 91058 Erlangen, Germany}

\begin{abstract}
This study looks at the finite-dimensional adiabatic evolution influenced by weak perturbations, extending the analysis to the asymptotic time limit. Beginning with the fundamentals of adiabatic transformations and time-dependent effective Hamiltonians, we intuitively derive the Bloch equation. Our investigation of the solutions of the Bloch equation underscores the critical role of initial conditions and the assured existence of solutions, revealing the intricate link between leakage phenomena and the Bloch transformation. Numerical and analytical evaluations demonstrate that the leakage can remain small \textit{eternally}. That is, a system that starts in a particular eigenspace of the strong generator remains in the same respective eigenspace for arbitrary long times with an error of $\mathcal{O}(\gamma^{-1})$, where $\gamma$ describes the ratio between the strength of the system's strong Hamiltonian and the perturbation. 
\end{abstract}

\maketitle

\section{Introduction}\label{ch.intro}
The landscape of quantum mechanics and the realm of quantum evolution present both profound theoretical challenges and expansive potential for technological advancement. One of the key challenges in the field is the problem of quantum leakage, that is, unwanted transitions that occur when a quantum system, ideally evolving within its initial eigenspace, transitions out of this due to non-adiabatic effects or perturbations. This phenomenon is not just a theoretical curiosity; it has significant implications for quantum computing~\cite{aliferis_fault-tolerant_2007,motzoi_simple_2009,miao_overcoming_2023} and other quantum technologies where coherent evolution and the integrity of quantum states in a low-energy subspace are of paramount importance.

Adiabatic control emerges as a compelling strategy to mitigate quantum leakage. By ensuring that the evolution of a quantum system remains adiabatic, one can theoretically maintain the system within its initial eigenspace. This approach, however, has its complexities. The requirement for adiabaticity~\cite{kato1950} — particularly, the slow variation of system parameters compared to the inverse of the energy gap — imposes operational constraints and necessitates a deeper understanding of system dynamics. This is especially important for long-term evolution, as the current bounds, an example found in Theorem~4 in Ref.~\cite{burgarth_one_2022}, grow with the total time of the evolution.

This study specifically investigates finite-dimensional driven quantum systems operating within the weak coupling limit for comparatively large times — an essential regime for a host of practical applications where interactions are subtle yet non-negligible. In this context, bounding the leakage out of the eigenspaces of the system's Hamiltonian (drift) for an extended period of driving becomes a crucial objective. Establishing such bounds contributes to the foundational understanding of quantum systems under adiabatic evolution and informs the design and optimization of quantum operations, particularly in scenarios where quantum systems must maintain coherence over long timescales.

Our investigation serves as an extension of the work of Burgarth \textit{et al.}\ from Ref.~\cite{eternal}, and Szab\'o \textit{et al.}\ from Ref.~\cite{szabo_robust_2025}, which prove that adiabaticity holds \textit{eternally} in autonomous systems. They show that for a static strong drift and weak drive Hamiltonians with a $\gamma$ characteristic relative strength, the system remains in each eigenspace of the strong generator for \textit{arbitrary long driving} with a leakage of only $\mathcal{O}(\gamma^{-1})$. Relatedly, Facchi \textit{et al.} in Ref.~\cite{facchi_robustness_2025} analyze the long-time stability of time-independent quantum systems with relatively bounded perturbations. This paper aims to extend the analysis to time-dependent Hamiltonian systems.

To address the challenge of analyzing the leakage for asymptotic times, our approach concentrates on the theory of effective generators. Effective Hamiltonians have been comprehensively studied in quantum mechanics and condensed matter physics as a key tool to understand complex many-body systems. The distinctive advantage of these Hamiltonians lies in their ability to achieve a separation of timescales. This feature allows isolating and concentrating on a system's relevant degrees of freedom. The separation is particularly important in understanding the dynamics and interactions in quantum systems, where direct analytical solutions often need to be more attainable. Furthermore, effective Hamiltonians provide a framework for perturbative solutions, simplifying otherwise intractable quantum many-body problems. Through this perturbative approach, they unveil an approximate but insightful and hopefully faithful picture of the system's behavior, making them indispensable in theoretical analysis and practical applications. Illustrative examples of effective Hamiltonians underscore their diverse applications and methodologies. The Zeno Hamiltonian~\cite{facchi_quantum_2008,facchi_quantum_2010}, for instance, operates under the premise of an infinitely strong separation of timescales, describing Hamiltonian evolution projected on its individual eigenspaces. Both static effective Hamiltonians~\cite{venkatraman_static_2022} and Floquet Hamiltonians~\cite{floquet_sur_1883, Schmidt_Floquet_2019} adeptly describe crucial dynamics in systems subjected to fast oscillating drives. The Schrieffer-Wolff Hamiltonian~\cite{schrieffer_relation_1966,bravyi_schriefferwolff_2011,goldin_nonlinear_2000,theis_counteracting_2018,malekakhlagh_first-principles_2020,malekakhlagh_time-dependent_2022} presents a perturbative solution that maintains unitarity, thereby preserving the physicality of the approximative solution. These examples highlight effective Hamiltonians' versatility and indispensability in deciphering quantum systems' behaviors across various scenarios. The selection of the most appropriate effective Hamiltonian is highly dependent on the specific problem at hand and on the study's goals; however, the debate of which effective Hamiltonian is optimal for each problem is outside the scope of this paper.

Additionally, the Bloch method initially proposed in Ref.~\cite{bloch_sur_1958} focuses on the similarity transformation itself rather than the effective Hamiltonian. The method - renowned for its application in solid-state physics and spin systems under the influence of external fields and internal interactions - employs wave operators that map the eigenfunctions of the strong generator to the evolution of the complete system. As a result, the effective Hamiltonian mirrors the spectral characteristics of the original Hamiltonian, maintaining the eigenvalues while achieving a diagonal evolution. Iterative expansions for the time-independent Bloch wave operator were proposed by Lindgren~\cite{lindgren_rayleigh-schrodinger_1974}, Durand~\cite{durand_direct_1983}, and Jolicard~\cite{jolicard_effective_1987}, while Jolicard and Austin first presented a recursive solution to the time-dependent problem in Ref.~\cite{jolicard_recursive_1991} and later Jolicard et al. in Ref.~\cite{jolicard_new_1999}. The two-part topical review by Jolicard and Killingbeck~\cite{jolicard_bloch_2003,killingbeck_bloch_2003} on the time-independent and time-dependent Bloch equation presents an exhaustive theoretical discussion on the topic with several examples of the method. However, this literature falls short of discussing the existence of the solution to the Bloch equation, discussing the behavior of the solution for different initial conditions, or examining the error bounds and thus the asymptotic leakage.

Our study focuses first on defining the Bloch wave operator and rigorously deriving the time-dependent Bloch equation while accounting for the subtleties of adiabatic processes in the presence of perturbation and the associated leakage phenomena in a time-dependent context. The derivations presented here incorporate these critical aspects, extending the applicability of the Bloch equation and enhancing its descriptive power in quantum systems experiencing adiabatic evolution, including the asymptotic time limit. We use this refined tool to link the leakage and the norm of the Bloch wave vector to provide more accurate predictions of quantum transitions and to pave the way for developing improved adiabatic control methods that effectively bound quantum leakage for long-duration quantum operations. For the sake of simplicity, we will restrict ourselves to finite-dimensional closed quantum systems, noting that substantial progress on long-term robustness against time-independent unbounded perturbations was recently made.

\section{Deriving the Time-Dependent Bloch Equation} \label{ch:2}
This section is dedicated to deriving the time-dependent Bloch equation for the unitary dynamics $F(t,t_0)$ from time $t_0$ to $t$, solving the Schr\"odinger equation
\begin{equation} 
    \label{eq:F_diff}
    \dot{F}(t,t_0)=[\gamma \bar{B}(t)+\bar{C}(t)]F(t,t_0), 
\end{equation}
 where $\gamma$ embodies the adiabatic parameter signifying the rate of change with respect to the system's intrinsic timescales, $\bar{B}(t)$ stands for the system's (strong) Hamiltonian, and $\bar{C}(t)$ symbolizes the (weak) drive, both generators being skew-Hermitian. In this context, the drive is treated as a perturbation, acknowledging its secondary role compared to the primary system dynamics. This equation is instrumental, serving as the backbone of our investigation, as it encapsulates the system's evolution over time, influenced by its internal dynamics (drift) and external driving forces.

 The analysis is grounded in standard assumptions compatible with the adiabatic theorem~\cite{kato1950,nenciu_linear_1993,burgarth_generalized_2019,gaeta_perturbation_2022,burgarth_one_2022}. 
 The skew-Hermitian time-dependent strong generator has the spectrum $\{b_k(t)\}$ and corresponding orthogonal eigenprojectors $\{P_k(t)\}$:
\begin{equation} 
    \label{eq:B_sum}
    \Bar{B}(t) = \sum_{k}^{} b_k(t)P_k(t). 
\end{equation}
The adiabatic assumptions are that the eigenvalues of $\Bar{B}(t)$ remain non-crossing $(b_k(t)\neq b_l(t), \ \forall k\neq l)$, continuously differentiable functions of time, and the projections $P_k(t)$ are continuously differentiable twice.

In the time-dependent scenario, the block representation, and consequently the eigenspaces of the generator $\Bar{B}(t)$, are time-dependent. Therefore, it becomes necessary to dynamically transform the evolution operator and employ the adiabatic theorem to describe quantum evolution in a constant diagonal representation over time. 

\subsection{The Adiabatic Frame}

Kato's work on the adiabatic theorem~\cite{kato1950} provides the theoretical foundation for transforming the problem into the natural frame of the strong generator. In this picture, the drift has the right block structure that is independent of time.

An adiabatic transporter can be defined as the unitary
\begin{equation} 
    \label{eq:W}
    W(t_f,t_0) = \mathcal{T} \exp\!\left( \int_{t_0}^{t_f} dt\, A(t) \right),
\end{equation}
where
\begin{equation} 
    \label{eq:A}
    A(t) = \frac{1}{2} \sum_k[\dot{P_k}(t),P_k(t)].
\end{equation}
From the adiabatic theorem, it follows that $W(t)$ satisfies the intertwining relations
\begin{equation} 
    \label{eq:WP}
    W(t,t_0)P_k(t_0) = P_k(t)W(t,t_0),
\end{equation}
enabling the definition of a new generator with a time-wise constant block structure:
\begin{equation} 
    \label{eq:B_tilde}
    B(t,t_0) = W\inv(t,t_0)\Bar{B}(t)W(t,t_0) = \sum_{k}^{} b_k(t)P_k(t_0).
\end{equation}

Now, one can rewrite the original evolution operator $F(t_f,t_0)$ between the initial time $t_0$ and the final time $t_f$ as:
\begin{widetext}
\begin{equation} 
    \label{eq:1}
    \begin{split}
    F(t_f,t_0)&=\mathcal{T} \exp\!\left(\int_{t_0}^{t_f} dt\, [\gamma \Bar{B}(t)+\Bar{C}(t)] \right)   \\
    &= \mathcal{T} \exp\!\left(\int_{t_0}^{t_f} dt\, [\gamma \Bar{B}(t)+\Bar{C}(t)-A(t)+A(t)]\right) \\
    &= W(t_f,t_0) \mathcal{T} \exp\!\left( \int_{t_0}^{t_f} dt\,W\inv(t,t_0)[\gamma \Bar{B}(t)+\Bar{C}(t)-A(t)]W(t,t_0) \right)\\
    &= W(t_f,t_0) \mathcal{T} \exp\!\left( \int_{t_0}^{t_f} dt\,[\gamma B(t,t_0)+C(t,t_0)] \right) \\
    &=W(t_f,t_0)M(t_f,t_0)
    \end{split}
\end{equation}
\end{widetext}
with the modified weak drive $C(t,t_0)$ having components originating from the transformation of $\Bar{C}(t)$ and also from the derivative of the transporter $W(t,t_0)$:
\begin{equation} 
    \label{eq:C_tilde}
    C(t,t_0) = W\inv(t,t_0)[\Bar{C}(t)-A(t)]W(t,t_0).
\end{equation}
It should be noted that the skew-Hermitian nature of the weak generator in this adiabatic frame is a direct consequence of the Hermiticity of the projectors $\{P_k(t)\}$. It is, therefore, convenient to define a skew-Hermitian generator $H(t,t_0) = \gamma B(t,t_0)+C(t,t_0)$ and the respective unitary $M(t_f,t_0)$ that will be referred to as the Hamiltonian and the full evolution operator in the rest of this section that explores the general theory of effective Hamiltonians, their corresponding transformation operators and their relation to leakage in this adiabatic frame with time-wise constant block structure determined by the projectors $\{P_k(t_0)\}$ at the initial time $t_0$.

Without loss of consistency, for clarity, the starting time parameter $t_0$ in the following sections will be suppressed for all operators, as each has this as its last argument. For example, $P_k$ will imply $P_k(t_0)$, and $H(t)$ will denote $H(t,t_0)$, however, eigenvalues like $b_k(t)$ retain their original notation.

\subsection{Dynamical Transformations}
As mentioned, this section focuses on the full-evolution operator:
\begin{equation} 
    \label{eq:M}
    M(t_f)=\mathcal{T} \exp\!\left(\int_{t_0}^{t_f} dt\,H(t)   \right),
\end{equation}
that undergoes a transformation with a yet very general dynamical transformation operator $U(t)$ that is required to be invertible and differentiable on the whole of its domain. After transformation, the effective evolution $M_{\text{eff}}(t_f)$ and its generator $H_{\text{eff}}(t)$ are defined in the canonical way:
\begin{gather} 
       M(t_f) = U(t_f)M_{\text{eff}}(t_f)U^{-1}(t_0), \label{eq:Meff} \\
       H_{\text{eff}}(t) = U^{-1}(t)H(t)U(t)-U^{-1}(t)\dot{U}(t).  \label{eq:Heff}
\end{gather}

Ideally, an effective generator should yield a faithful evolution, meaning that $M_{\text{eff}}(t)$ closely approximates $M(t)$. Specifically, in addressing the weak coupling problem outlined in Eq.~\eqref{eq:F_diff}, this approximation could be formulated as either a power series in terms of the adiabatic parameter $\gamma^{-1}$ or potentially as an exponential function of this to describe an exponentially accurate effective approximation.

The effective generator $H_{\text{eff}}(t)$ and an initial condition $U(t_0)$ on this time-dependent transformation do therefore determine the full dynamical transformation and vice versa. Rearranging Eq.~\eqref{eq:Heff}, it is easy to see that the dynamical transformation follows the differential equation 
\begin{equation} 
    \label{eq:U-dot}
    \dot{U}(t)=H(t)U(t)-U(t)H_{\text{eff}}(t).
\end{equation}

Many descriptions of quantum evolution by effective Hamiltonians fit into the picture presented above; however, some are especially interesting in the context of the problem of adiabatic controls and weak coupling presented in Eq.~\eqref{eq:F_diff}.

\subsection{Leakage and the Block-Diagonal Effective Hamiltonians}
In the context of transitions out of the eigenspaces of the strong generator, it is advantageous to look at block-diagonal effective generators and evolutions. Mathematically the leakage out of a given space defined by its projection $P_k(t)$ is the norm of $\|[\mathbbm{1}-P_k(t)]F(t)P_k(t_0)\|$ which for all unitarity invariant norms is equivalent to $\|[\mathbbm{1}-P_k(t_0)]M(t)P_k(t_0)\|$. This can be upper bounded by the distance of the true evolution $M(t)$ to a block-diagonal effective evolution $M_{\text{eff}}(t)$ that expresses no leakage.

Following our discussion of the dynamical transformation, one should also link the leakage to this operator. That is, if the transformation is close to identity, then obviously the true evolution is quasi-diagonal, and the leakage is small. The mathematically rigorous statement is that
\begin{equation} 
    \label{eq:leakage}
\|M(t)-M_{\text{eff}}(t)\| \leq \frac{2\delta}{1-\delta},
\end{equation}
where $\delta$ is a uniform bound of the distance of the transformation operator to identity: $\|U(t)-\mathbbm{1}\| \leq \delta$, $\forall t \in [t_0,t_f]$. See Appendix~\ref{sec:appendix} for the derivation of this bound.

An effective Hamiltonian and evolution describe no leakage for all times $t$ from the initial time $t_0$ to the final time $t_f$ if they are endowed with the same block structure as the strong generator $B(t)$, i.e., for all subspaces $k$ 
\begin{equation}
    \label{HM_block}
   [H_{\text{eff}}(t),P_k] = [M_{\text{eff}}(t),P_k]=0.
\end{equation}

For this system, the operators 
\begin{gather} 
       M_{\text{eff,}k}(t) \equiv  P_kM_{\text{eff}}(t)P_k, \quad M_{\text{eff}}(t) = \sum_k M_{\text{eff,}k}(t),    \label{eq:Meffk} \\
      U_k(t) \equiv U(t)P_k, \quad U(t) = \sum_k U_k(t)    \label{eq:Uk} 
\end{gather}
act on the specific time-independent subspaces of $P_k$. Rearranging Eq.~\eqref{eq:Meff}, it is easy to express the transformation operator and the wave vector in terms of the effective and true evolutions:
\begin{gather} 
      U(t) = M(t) U(t_0) M_{\text{eff}}^{-1}(t) ,    \label{eq:Ufsol} \\
      U_k(t) = M(t) U(t_0) M_{\text{eff,}k}^{-1}(t) ,    \label{eq:Ukfsol} 
\end{gather}
where the second equation follows from multiplying the first one on the right-hand side by the projector $P_k$. In the second equation, the inverse is the appropriate pseudo inverse acting on the subspace such that $M_{\text{eff,}k}^{-1}(t)M_{\text{eff,}k}(t)=M_{\text{eff,}k}(t)M_{\text{eff,}k}^{-1}(t)=P_k$, that acts as an inverse on the subspace of $P_k$, and it is $0$ otherwise. Remark that for all non-zero $M_{\text{eff,}k}(t)$, this inverse exists and is well determined. Moreover, the block structure of the effective evolution is inherited by its inverse as well. Further discussion on the existence of the inverse is included at the end of the next subsection.

Multiplying Eq.~\eqref{eq:U-dot} from the right-hand side with the projectors $P_k$, it is possible to rewrite the differential equation for the dynamical transformation acting on the explicit subspaces describing diagonal evolution
\begin{equation} 
    \label{eq:Uk-dot}
    \dot{U}_k(t)=H(t)U_k(t)-U_k(t)H_{\text{eff}}(t).
\end{equation}
This will prove useful when talking about more specific effective Hamiltonians, especially for deriving the Bloch equation.

\begin{figure}[] 
     \centering
     \includegraphics[width=0.5\textwidth]{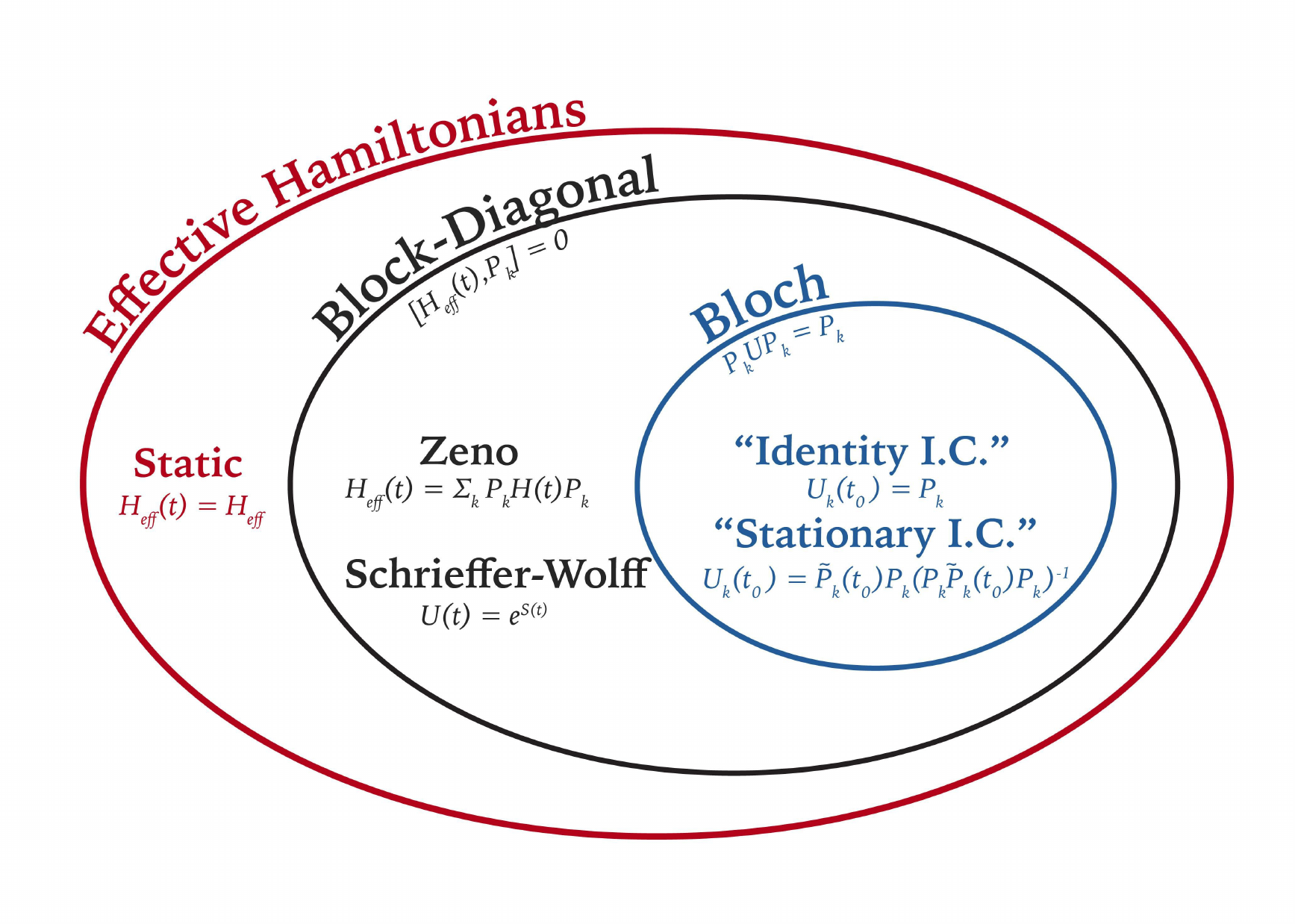}
     \caption{A Venn diagram of effective time-dependent Hamiltonians: Diagonal, Zeno, and Schrieffer-Wolff and Bloch, highlighting two specific solutions of the latter, based on the initial conditions.}
    \label{fig:effham}
\end{figure}

The effective diagonal Hamiltonians studied in the literature are obtained by restrictions on the structure of the Hamiltonian or on the dynamical transformation itself. Figure \ref{fig:effham} highlights the relationship between the most studied ones. The Zeno Hamiltonian~\cite{facchi_quantum_2008,facchi_quantum_2010} describes a first-order approximation to the weak coupling model and is elegantly achieved by projecting the effective generator onto each subspace $H_{\text{eff}}(t)=\sum_k P_k H(t) P_k$. The perturbative Schrieffer-Wolff generator~\cite{bloch_sur_1958,lindgren_rayleigh-schrodinger_1974,durand_direct_1983,jolicard_effective_1987,jolicard_recursive_1991,jolicard_new_1999,jolicard_bloch_2003,killingbeck_bloch_2003,eternal} results in postulating the form of the wave operator to be the exponential $U(t)=e^{S(t)}$ for a purely off-diagonal, time-dependent, skew-Hermitian generator, i.e.~$P_kS(t)P_k=0$ for all $k$. Hence, in this form, both the transformation and the effective evolution become unitary, which can have a potential advantage in preserving physical properties. The following subsection studies the particularities of the family of time-dependent Bloch generators, highlighting the insights that this has on understanding quantum leakage.

\subsection{The Bloch Generators}
In the extant body of literature, the derivation of the time-dependent Bloch equation is usually approached from the wave-operator perspective, which, from the perspective of scattering processes, transforms an incoming wave function into an outgoing one. However, one can arrive at it easily from the perspective of the theory of effective diagonal Hamiltonians presented above. We define the Bloch condition for the Bloch wave operator simply as
\begin{equation} 
    \label{eq:BC}
    P_kU(t)P_k=P_k,
\end{equation}
for all times $t$ on its domain. The transformation does not include any dynamics inside the subspace when acting on one, $P_k$, but only has terms related to transitions. In the diagonal representation of the strong generator, the wave operator acts trivially inside the blocks and potentially has other non-zero terms outside the blocks.

Multiplying the differential equation~\eqref{eq:Uk-dot} from the left-hand side with $U_k(t)$, and recognizing that $P_k\dot{U}_k(t)P_k=0$ as inferred from the Bloch condition from above follows that $0=U_k(t)H(t)U_k(t)-U_k(t)H_{\text{Bloch}}(t)$. Subsequently, substituting this identity to the differential equation of the wave vector results immediately in the well-established time-dependent Bloch equation: 
\begin{equation} 
    \label{eq:BE}
    \dot{U}_k(t)=H(t)U_k(t)-U_k(t)H(t)U_k(t),
\end{equation}
which is a non-linear operator Riccati equation~\cite{abou-kandil_matrix_2003}. Significantly, its solution solely depends on the full generator $H(t)$ and the initial condition $U(t_0)$, while the effective Hamiltonian can be calculated thereafter. The solutions to this equation have been studied in a multitude of mathematical papers and textbooks for a range of examples. Closed-form solutions are only known for a handful of examples; however, both numerical techniques~\cite{abou-kandil_matrix_2003}, as well as recursive methods~\cite{jolicard_recursive_1991} have been developed.

The solution of the Bloch equation can be expressed in terms of the full evolution
\begin{equation} 
    \label{eq:BS}
    U_k(t)=M(t)U_k(t_0)[P_kM(t)U_k(t_0)P_k]^{-1},
\end{equation}
equivalently
\begin{equation} 
    \label{eq:BSfull}
    U(t)=M(t)U(t_0)\sum_k[P_kM(t)U_k(t_0)P_k]^{-1},
\end{equation}
where the inverse is once again the unique inverse associated with the projector $P_k$. To see this, first substitute the diagonal solution~\eqref{eq:Ukfsol} into the Bloch condition~\eqref{eq:BC}. Explicitly showing the relevant projectors, this condition can be written as 
\begin{equation} 
    \label{eq:BChelp}
    P_k[M(t) U(t_0)]P_k [P_kM_{\text{Bloch,}k}(t)P_k]^{-1} = P_k.
\end{equation}
From the uniqueness of the inverse $M_{\text{Bloch,}k}^{-1}(t)$ in the block $P_k$ follows the exact expression for the effective diagonal Bloch evolution:
\begin{equation} 
    \label{eq:MBloch}
    M_{\text{eff,}k}(t)=M_{\text{Bloch,}k}(t)=P_k M(t) U(t_0) P_k.
\end{equation}
The Bloch solution~\eqref{eq:BS} is just substituting this Bloch evolution~\eqref{eq:MBloch} to the general diagonal solution~\eqref{eq:Ukfsol}. For example, in Ref.~\cite{jolicard_effective_1995}, this Bloch solution is the defining equation of the time-dependent wave operator, and the general theory and the Bloch equation are derived from this starting point, but without recognizing that the transformation could have an initial value that is different from identity.  

The time-dependent Bloch equation can be solved explicitly as a Riccati equation using Radon's lemma~\cite{abou-kandil_matrix_2003} that links the non-linear differential equation to an associated linear system. (see Appendix~\ref{sec:appendixRadon}). Nevertheless, after carrying out the ensuing simplifications, this route reproduces the same conclusions already obtained above and offers no additional insights.

An important aspect of discussions of the Bloch equation concerns its initial conditions. While most literature focuses on the equation's solutions, there is an implicit assumption, for example, in the recursive solution of Ref.~\cite{jolicard_new_1999} and also in the review~\cite{killingbeck_bloch_2003}, that the transformation at the initial time $t_0$ is the identity, denoted $U(t_0)=\mathbbm{1}$, or equivalently $U_k(t_0)=P_k$. We are going to refer to this as the ``identity'' initial condition. Alternatively, it is also possible to define the initial condition to be the solution to the time-independent Bloch equation at the relevant starting time. Under this postulate, the derivative of the wave operator is vanishing at this initial time. Hence, it is convenient to call it the ``stationary'' initial condition. The solution to the time-independent problem has been presented many times; for example, see Durand's~\cite{durand_direct_1983} approach. In operator form, this initial condition reads $U_k(t_0)= \Tilde{P}_k(t_0)P_k[P_k\Tilde{P}_k(t_0)P_k]^{-1}$, where $\Tilde{P}_k(t)$ represents the projector onto the direct sum of the eigenspaces of the full Hamiltonian $H(t)=\gamma B(t)+C(t)$ belonging to the eigenvalue $b_k(t)$ of the strong generator. These initial conditions can be likened to the quantum Bloch equivalent of the well-known Dirichlet and Neumann boundary conditions. The first directly gives the problem's boundary value, while the second sets the value of the derivative to zero. There are infinitely many other boundary values that could be considered, all resulting in different effective Hamiltonians, finding the optimal of which is beyond the scope of this analysis.

It is also important to discuss the existence of the Bloch solution of the transformation $U(t)$ --- when considering an infinitely prolonged period --- and the blow-up conditions, in which case it would grow to be unbounded. Arguably, there is a logical connection between leakage, the existence of the inverse of the diagonal evolution, and the existence of the transformation operator. The previous subsection started with postulating the existence of $U^{-1}(t)$, from which, through the definition of the effective evolution, follows that both $M_{\text{Bloch}}(t)$ and $M_{\text{Bloch}}^{-1}(t)$ are well defined for all times. Conversely, from Eq.~\eqref{eq:Ufsol}, one can see that the existence of an effective evolution and its inverse imply the existence and invertibility of a wave operator.

Additionally, in scenarios where the leakage is minimal, on the order of $\mathcal{O}(\gamma^{-1}))$, the solution~\eqref{eq:BS} suggests that any suitable initial condition $\|U(t_0) - \mathbbm{1}\| = \mathcal{O}(\gamma^{-1}))$ will yield a Bloch wave operator that is similarly proximate to the identity. Finding such an initial condition is straightforward. Both the 'identity' and the `stationary' boundary conditions satisfy this requirement; the latter was demonstrated, for instance, in the paper~\cite{eternal}. Conversely, as per the definition of leakage in Eq.~\eqref{eq:leakage}, the inverse of this assertion is self-evident. If a minor transformation exists, such that$\|U(t) - \mathbbm{1}\| = \mathcal{O}(\gamma^{-1})$, then a Bloch evolution approximating the true evolution within $\mathcal{O}(\gamma^{-1})$ must also exist, implying that the leakage is correspondingly small. Figure \ref{fig:trinity} visually encapsulates this triad of equivalent concepts.

\begin{figure}[] 
     \centering
     \includegraphics[width=0.3\textwidth]{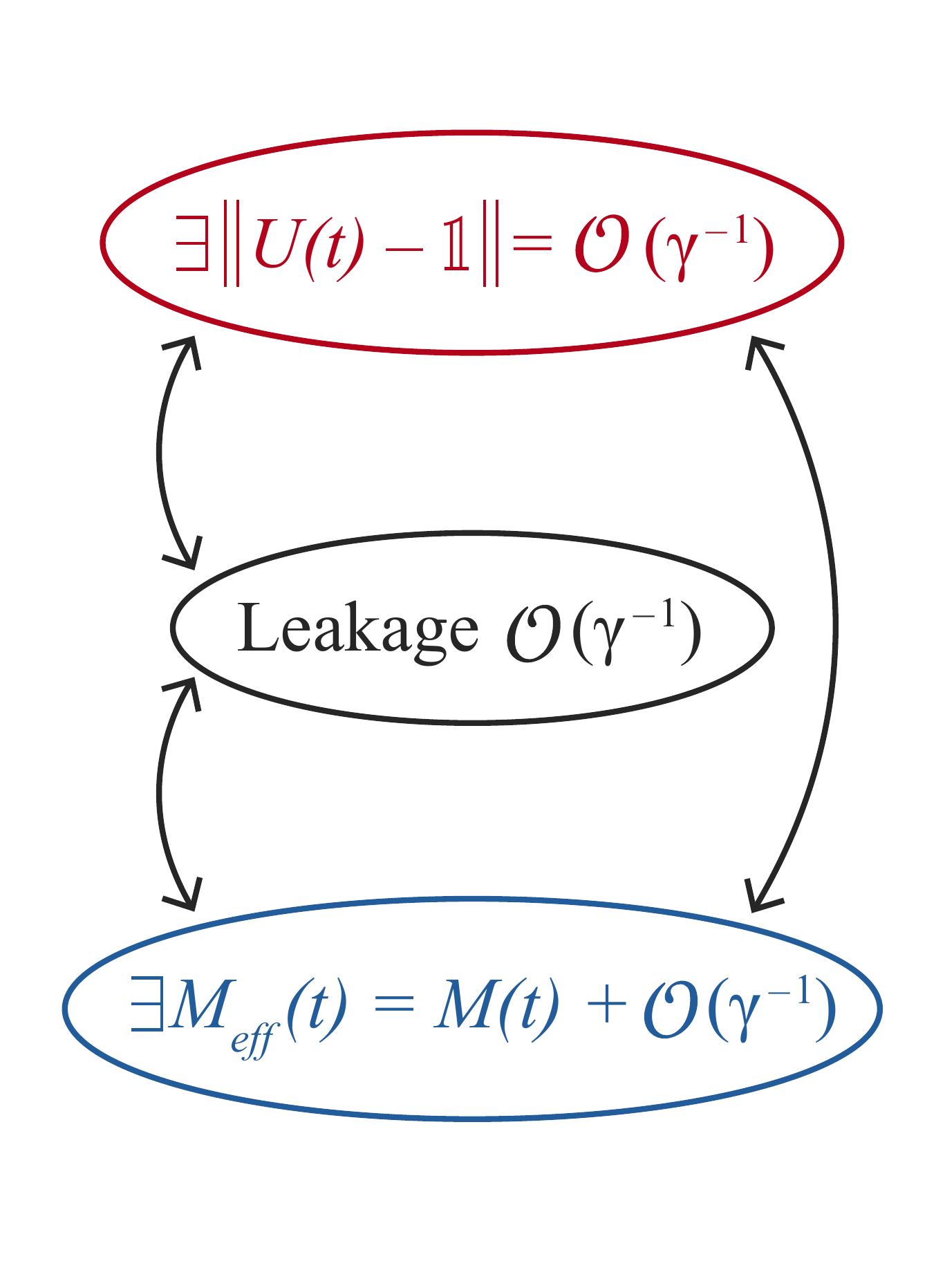}
     \caption{The triad of the leakage - Bloch Hamiltonian $M_{\text{eff}}(t)$ - wave operator $U(t)$, highlighting the three equivalent statements. }
     \label{fig:trinity}
\end{figure}

\subsection{Constructing a Unitary Effective Evolution}
A drawback of the Bloch transformation is that it is not a unitary change of basis; hence, the resulting evolution operator $M_{\text{Bloch}}(t)$ is, in general, not unitary either. Given a uniformly bounded Bloch operator, $\|U(t)-\mathbbm{1}\| \leq \delta \ll1$, with an appropriate initial condition such that $[U^\dag(t_0)U(t_0),P_k] = 0 $, it is, however, possible to construct a unitary transformation that also block-diagonalizes the evolution $M(t)$. Inspired by the polar decomposition of the Bloch operator, define 
\begin{equation}
    V(t)=U(t)[U^\dag (t)U(t)]^{-1/2},
\end{equation}
that is unitary by construction. 

The correct choice of initial conditions, such that $U^\dag(t_0)U(t_0)$ is block-diagonal ensures that
\begin{align}\label{eq:UdagU}
    U^\dag(t)U(t)=\sum_{k}& [P_kU_k^\dag(t_0)M^\dag(t)P_k]^{-1}  \\
    &{}\times U^\dag(t_0)U(t_0) 
    \sum_{l}[P_lM(t)U_k(t_0)P_l]^{-1}
\end{align}
remains block-diagonal for all times. Such an initial condition is not difficult to find; both the ``identity'' and ``stationary'' initial conditions satisfy this criterion. The bound on the Bloch transformation guarantees that this unitary is always well defined and also that it is close to identity:
\begin{equation} \label{eq:Vminus1}
    \|V(t)-\mathbbm{1}\| \leq \frac{1+\delta}{\sqrt{1-2\delta-\delta^2}}-1.
\end{equation}
The proof is provided in Appendix~\ref{sec:appendix3}.

This construction makes it immediately clear that, in the given frame, the effective evolution is block-diagonal:
\begin{widetext}
\begin{align} 
    \label{eq:MSW}
    V^{-1}(t)M(t)V(t_0) &= [U^\dag (t)U(t) ]^{-1/2} U^\dag(t)M(t)U(t_0) [U^\dag (t_0)U(t_0)]^{-1/2} \nonumber \\
    &= [U^\dag (t)U(t) ]^{-1/2} U^\dag(t)U(t)M_{\text{Bloch}}(t) [U^\dag (t_0)U(t_0) ]^{-1/2}  \nonumber \\
    &= [U^\dag (t)U(t) ]^{1/2} M_{\text{Bloch}}(t) [U^\dag (t_0)U(t_0) ]^{-1/2}.  \\
\end{align}
\end{widetext}
In the second identity we used the definition of the Bloch evolution~\eqref{eq:Meff} to get to a product of three diagonal operators in the last line. 

Subsequent sections detail this framework of the Bloch wave operator, its initial conditions, and how asymptotic leakage manifests in finite-dimensional problems that are of significant interest to the quantum community.

\section{Examples} \label{ch:3}

This section explores a couple of noteworthy time-dependent problems to demonstrate the efficacy of our approach in quantifying leakage as the timescale approaches infinity. The examination begins with a detailed look at one of the exactly solvable such problems, discussing insights into its dynamics. Furthermore, the second part of the chapter looks at a case significant for transmon qubits controlled with long pulses. While not exactly solvable, this example is studied using numerical methods. The focus is on investigating the Bloch wave operator and the behavior of the bound on the leakage under various initial conditions. This dual approach enhances the understanding of these specific cases and highlights the broader implications and applications within driven quantum systems.

\subsection{The Landau-Zener Model}

We solve the Landau-Zener model~\cite{zeener_non-adiabatic_1932, ho_simple_2014, matus_analytic_2023} analytically for the true evolution $M(t)$ and use the Bloch solution~\eqref{eq:BS} to show that bounding the Bloch transformation reproduces the right bound of the non-adiabatic transition rate known in literature.   

The canonical model from the perspective of the notation of the study focusing on strong and weak generators reads: 
\begin{gather} 
       \Bar{B}(t) = -i(X+tZ), \label{eq:LZB} \\
       \Bar{C}(t) = 0,  \label{eq:LZC}
\end{gather}
where $X$, $Z$, and later $Y$ will also be used, are the Pauli operators. 

The eigenvalues of the strong generator $\Bar{B}(t)$ 
\begin{equation} \label{eq:LZ1-bs}
b_{\pm}(t) = \pm i \sqrt{1+t^2},
\end{equation}
do not cross, and the eigenprojections are continuously differentiable twice
\begin{equation} \label{eq:LZ1-Ps}
P_\pm(t) = \frac{1}{2}\left(\mathbbm{1} \mp \frac{X+tZ}{\sqrt{1+t^2}}\right),
\end{equation}
\begin{equation} \label{eq:LZ1-Pdots}
\dot{P}_\pm(t) = \mp \frac{1}{2} \frac{Z-tX}{\sqrt{1+t^2}},
\end{equation}
such that
\begin{equation} \label{eq:LZ1-PdotPcom}
[\dot{P}_\pm(t),P_\pm(t)]=\frac{1}{2}\frac{i Y}{1+t^2},
\end{equation}
therefore, the adiabatic theorem is applicable, and the adiabatic transporter is found by Eqs.~\eqref{eq:W} and~\eqref{eq:A} to be
\begin{align}
\label{eq:LZ1-W}
W(t,t_0) ={}&\mathcal{T} \exp\!\left(\int_{t_0}^{t} ds\, A(s)  \right)  \\ = {}&\exp\!\left(\frac{i Y}{2}[\arctan(t)-\arctan(t_0)]\right)  \\
={}& \cos\!\left(\frac{\arctan(t)-\arctan(t_0)}{2}\right)\mathbbm{1}\\&{}+i\sin\!\left(\frac{\arctan(t)-\arctan(t_0)}{2}\right)Y,
\end{align}
which has the asymptotic limit $\underset{t\rightarrow\infty}{\lim}W(t,-t)=iY$.

The new strong generator, with time-independent eigenprojections, becomes
\begin{align}
\label{eq:LZ1-B}
B(t,t_0)
&=\sum_\pm b_{\pm}(t) P_\pm(t_0)  \\ 
&=\sum_\pm \pm \frac{i \sqrt{1+t^2}}{2}\left(\mathbbm{1} \mp \frac{X+t_0Z}{\sqrt{1+t_0^2}}\right),
\end{align}
while the weak generator, using Eq.~\eqref{eq:C_tilde} is 
\begin{equation} \label{eq:LZ1-C}
C(t,t_0) = - \frac{iY}{2(1+t^2)}.
\end{equation}

First, we solve the exact evolution to study the properties of the wave operator $U_k(t,t_0)$.

\subsubsection{Solving Landau-Zener}

The true evolution, $F(t,t_0)$, is solved by setting up the Landau-Zener problem as 
\begin{equation} \label{eq:LZ1-de1}
\begin{pmatrix}
\dot{\phi}(t) \\
\dot{\psi}(t)
\end{pmatrix}=-i
\begin{pmatrix}
\gamma t & \gamma \\
\gamma & - \gamma t
\end{pmatrix}
\begin{pmatrix}
\phi(t) \\
\psi(t)\end{pmatrix},
\end{equation}
from which a second order differential equation for $\phi(t)$ can be derived:
\begin{equation} \label{eq:LZ1-de2}
\begin{split}\ddot{\phi}(t) & =-i\gamma\phi(t)-i\gamma t\dot{\phi}(t)-i\gamma\dot{\psi}(t)\\
 & =-i\gamma\phi(t)-\gamma^{2}t^{2}\phi(t)-\gamma^{2}\phi(t).
\end{split}
\end{equation}
By rescaling the time and coupling variables, this is equivalent to the differential equation in the standard form: 
\begin{equation}\label{eq:LZ1-LZstandard2}
\begin{split}2i\gamma\frac{d^{2}}{dz^{2}}\phi(z) & =-i\gamma\phi(z)-\gamma^{2}\frac{z^{2}}{2i\gamma}\phi(z)-\gamma^{2}\phi(z)\\
\frac{d^{2}}{dz^{2}}\phi(z) & =-\frac{1}{2}\phi(z)+\frac{z^{2}}{4}\phi(z)+\frac{i\gamma}{2}\phi(z)\\
\end{split}    
\end{equation}
\begin{equation}\label{eq:LZ1-LZstandard}
\frac{d^2}{dz^2}\phi(z)+\left(n+\frac{1}{2}-\frac{z^2}{4}\right)\phi(z) = 0,    
\end{equation}
two independent solutions of which are the parabolic cylinder functions $D_n(\pm z)$, with   
\begin{equation} \label{eq:LZ1-z}
    z = \sqrt{2\gamma} e^{+i\frac{\pi}{4}} t ,
\end{equation}
\begin{equation} \label{enq:LZ1-n}
    n = -\frac{i\gamma}{2}, \quad n \in i\mathbbm{R}^+.
\end{equation}

$\psi(s)$ is determined from Eq.~\eqref{eq:LZ1-de1}, $\psi(t) =  \frac{1}{\gamma} [i \dot{\phi}(t) - \gamma t \phi(t)] $. Using the recurrence relation of the derivative of parabolic cylinder functions~\cite{dlmf} $\frac{d}{dz}D_n(z)=-\frac{1}{2}zD_n(z)+nD_{n-1}(z)$, two independent solutions for the differential equation~\eqref{eq:LZ1-de1} are 
\begin{equation} \label{eq:LZ1-vecsol}
\begin{pmatrix}
D_{n}( \pm z) \\
\pm i\sqrt{n} D_{n-1}(\pm z)
\end{pmatrix}.
\end{equation}
For convenience, work with the normalized solution of 
\begin{align}
\label{eq:LZ1-normsol}
&\ket{{\uparrow}(t)}=
\begin{pmatrix}
\phi_{\uparrow}(t) \\
\psi_{\uparrow}(t)
\end{pmatrix} \\
&=e^{-\pi\gamma/8}
\begin{pmatrix}
D_{-i\gamma /2}(-\sqrt{2\gamma}\exp^{i\pi/4}t) \\
-\sqrt{\gamma/2}e^{i\pi/4}D_{-i\gamma /2-1}(-\sqrt{2\gamma}\exp^{i\pi/4}t)
\end{pmatrix},
\end{align}
which has an asymptotic behavior~\cite{dlmf} of 
\begin{multline}
\label{eq:LZ1-assymsol}
    \ket{{\uparrow}(t)}\rightarrow \ket{{\uparrow_{-}}(t)}=
    \begin{pmatrix}
    (-\sqrt{2\gamma}t)^{-i\gamma/2}e^{-i\gamma t^2/2} \\
    0
    \end{pmatrix}, \\
\end{multline} 

as $t\rightarrow-\infty$, and 

\begin{multline}
\ket{{\uparrow}(t)}\rightarrow \ket{{\uparrow_{+}}(t)}\\ 
= e^{-\pi\gamma/4}
\begin{pmatrix}
e^{-\pi\gamma/4}(\sqrt{2\gamma}t)^{-i\gamma/2}e^{-i\gamma t^2/2} \\
-\sqrt{\frac{2}{\gamma}}\frac{\sqrt{2\pi}}{\Gamma(i\gamma/2)}e^{-i\pi/4}(\sqrt{2\gamma}t)^{i\gamma/2}e^{i\gamma t^2/2} 
\end{pmatrix}, 
\end{multline}
as $t\rightarrow\infty$. Thus, the solution in the computational basis for $\dot{F}(t,t_0)=[\gamma \Bar{B}(t) +\Bar{C}(t)]F(t,t_0)$  with the boundary condition $F(t_0,t_0)=\mathbbm{1}$ is

\begin{equation} \label{eq:LZ1-F1F2}
F(t,t_0)=F_1(t)F_1^\dag(t_0)
\end{equation}
with
\begin{equation} \label{eq:LZ1-F1def}
F_1(t)=
\begin{pmatrix}
\phi_{{\uparrow}}(t) & -\psi_{\uparrow}^*(t)
\\
\psi_{\uparrow}(t) & \phi_{\uparrow}^*(t)
\end{pmatrix}
\end{equation}
as an auxiliary operator that is the solution for the Landau-Zener model from time $0$ to $t$.

\subsubsection{$U(t,t_0)$ for Landau-Zener}
For the exactly solvable model of Landau-Zener, it is possible to obtain a closed-form solution of the Bloch wave operator following Eq.~\eqref{eq:BS}. The full evolution $M(t_f,t_0)$ in the adiabatic frame for this problem is simply $M(t_f,t_0)=W(t_f,t_0)F_1(t_f)F_1^\dag(t_0)$, which in the computational basis is
\begin{widetext}
\begin{equation} \label{eq:MLZAS}
\lim_{t \rightarrow \infty}M(t,-t)=\left(\begin{array}{cc}
\sqrt{\frac{2}{\gamma}}\frac{\sqrt{2\pi}}{\Gamma(i\gamma/2)}e^{-\frac{i\pi}{4}}e^{-\frac{\pi\gamma}{4}}(\sqrt{2\gamma}t)^{i\gamma}e^{i\gamma t^{2}} & -e^{-\frac{\pi\gamma}{2}}\\
e^{-\frac{\pi\gamma}{2}} & \sqrt{\frac{2}{\gamma}}\frac{\sqrt{2\pi}}{\Gamma(-i\gamma/2)}e^{\frac{i\pi}{4}}e^{-\frac{\pi\gamma}{4}}(\sqrt{2\gamma}t)^{-i\gamma}e^{-i\gamma t^{2}}
\end{array}\right).
\end{equation}
\end{widetext}
Note that this is of the form
\begin{equation} \label{eq:MLZAS2}
\lim_{t \rightarrow \infty}M(t,-t)=\left(\begin{array}{cc}
\cos\phi \  e^{i\alpha(t)} & -\sin\phi\\
\sin\phi & \cos\phi \ e^{-i\alpha(t)}
\end{array}\right),
\end{equation}
with $\sin\phi=e^{-\frac{\pi\gamma}{2}}$ and $\alpha(t)$ is just a fast oscillating phase. 

In the basis of the Pauli $Z$ operator, as time extends to infinity, the projectors of the strong generator $\underset{t_0\rightarrow - \infty}{\lim}P_\pm(t_0)$ are diagonal. A notable correspondence is observed in this scenario: the projector associated with the full Hamiltonian becomes the same as that of the strong generator. Consequently, within the context of this problem, the 'identity' and `stationary' initial conditions, previously mentioned, are found to be equivalently the identity operator.
Using the solution~\eqref{eq:BS} with these initial conditions, follows that in the asymptotic limit the Bloch operator, 
\begin{equation} \label{eq:ULZAS}
\lim_{t \rightarrow \infty}U(t,-t)=\left(\begin{array}{cc}
1  & -\tan\phi \  e^{i\alpha(t)} \\
\tan\phi \  e^{-i\alpha(t)} & 1
\end{array}\right),
\end{equation}
is close to identity. The spectral norm $\underset{t\rightarrow\infty}{\lim}||U(t,-t)-\mathbbm{1}||=\tan\phi$ is  $e^{-\frac{\pi\gamma}{2}}$ for large $\gamma$. 
Therefore, in this example, our methodology reproduces a tight bound on $||U(t,-t)-\mathbbm{1}||$, which is the Landau-Zener formula.

\subsection{The Three-Level Driven System}
Contrary to the previous example, most of the problems in quantum mechanics are not exactly solvable; hence, for most problems of interest, one cannot simply invoke the Bloch solution~\eqref{eq:BS} to bound the leakage. In these examples, numerical methods can be used to compute the Bloch wave operator.

The demonstrative example presented here is the on-resonance driven three-level system with large detuning, $\gamma$~\cite{liebermann_optimal_2016}. Consider the full Hamiltonian $\hat{H}(\gamma,t)=\hat{H}_{0}(\gamma)+\hat{H}_{1}(t)$ of the drift and drive:
\begin{equation}
\label{eq:3lev1}
    \hat{H}(\gamma,t)=\begin{pmatrix}0\\
 & \omega\\
 &  & \omega (2+\gamma)
\end{pmatrix}+a\cos(\omega t)\begin{pmatrix} 
0 & -\frac{i}{2} & 0 \\
\frac{i}{2} & 0 & -\frac{i}{\sqrt{2}}\\
0 & \frac{i}{\sqrt{2}} & 0.
\end{pmatrix}
\end{equation}
Where $\omega$ is the energy scale that can be set to 1 without loss of generality, and $a$ is the driving amplitude. This amplitude could be considered time-dependent; however, for the purpose of this example, it is kept stationary.

In the interaction picture, this model transforms to the weak-coupling problem studied in the previous sections. With the transformation of $\exp(-it\hat{H}_{0}(\gamma=0))$, one obtains the skew-Hermitian generators $H(\gamma,t) = \gamma B(t)+C(t)$ of the form:
\begin{multline}
\label{eq:3lev2}
    H(\gamma,t)=-i \gamma \begin{pmatrix}0\\
 & 0 \\
 &  & 1
\end{pmatrix}+ 
\frac{a}{2}\begin{pmatrix}0 & -\frac{1}{2} & 0\\
\frac{1}{2} & 0 & -\frac{1}{\sqrt{2}}\\
0 & \frac{1}{\sqrt{2}} & 0
\end{pmatrix} \\
{}+\frac{a}{2}\begin{pmatrix}0 & -\frac{1}{2}e^{-i2t} & 0\\
\frac{1}{2}e^{i2t} & 0 & -\frac{1}{\sqrt{2}}e^{-i2t}\\
0 & \frac{1}{\sqrt{2}}e^{i2t} & 0
\end{pmatrix}.
\end{multline}

We solve the Bloch equation~\eqref{eq:BE} numerically for this tree-level system. As the figure \ref{fig:3lev1} shows, for fixed parameters $\gamma \gg a$, the wave operator exhibits a quasi-periodic structure perpetually.

Thus, it is possible to look at the maximal leakage in the asymptotic limit $t \gg \gamma$. The interest is the maximum distance from the identity of the Bloch transformation for a set amplitude $a$ as a function of the coupling $\gamma$. Following the plot \ref{fig:3lev2}, we conclude that for appropriate initial conditions, such as the ones discussed earlier, the wave operator remains $\mathcal{O}(\gamma^{-1})$ close to identity eternally. 

\begin{figure}[] 
     \centering
     \begin{subfigure}[b]{0.4\textwidth}
         \centering
         \includegraphics[width=\textwidth]{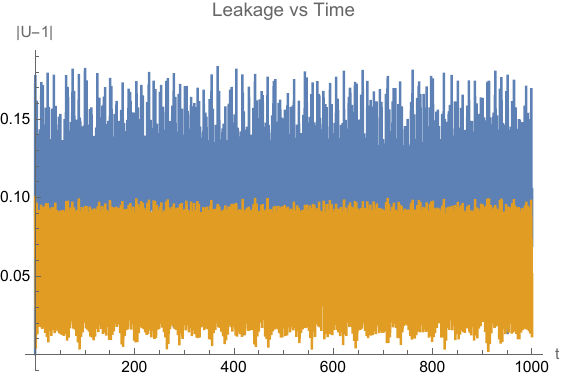}
         \caption{The Bloch operator for a set $\gamma=10.0$ and $a=1.0$, showing that the transformation is quasiperiodic. }
         \label{fig:3lev1}
     \end{subfigure}

     \centering
     \begin{subfigure}[b]{0.4\textwidth}
         \centering
         \includegraphics[width=\textwidth]{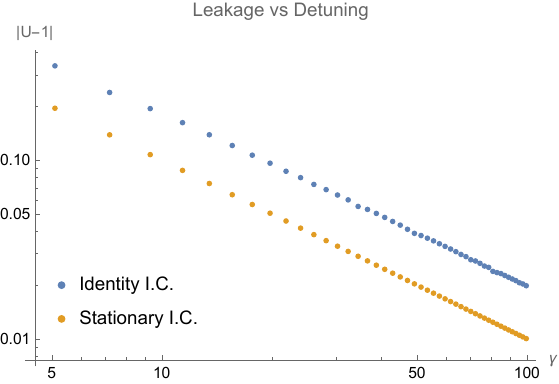}
         \caption{Log-Log plot of the wave operator for varying detuning $\gamma$ and constant drive amplitude $a=1.0$.}
         \label{fig:3lev2}
     \end{subfigure}
        \caption{The Frobenius Norm of $\|U(t) - \mathbbm{1}\|$ for the three-level System: the graphs compare the numerical solutions of the Bloch equation with the 'identity' - blue line - and 'stationary' - orange line - initial conditions. }
        \label{fig:3lev}
\end{figure}

\section{Discussions} \label{ch:4}
This study explored finite-dimensional adiabatic evolution under weak perturbations, extending our analysis to the asymptotic limit as time approaches infinity. Initially, we laid the groundwork by examining general adiabatic transformations alongside a broader perspective on time-dependent effective Hamiltonians. This foundational understanding enabled us to offer an intuitive derivation of the Bloch equation through the lens of diagonal effective Hamiltonians, improving our knowledge of the underlying physics.

The solution to the Bloch equation was investigated, focusing on the role of initial conditions and the assurance of the solution's existence. This deepened our understanding of the adiabatic process and highlighted the relationship between leakage phenomena and the Bloch transformation. Through both numerical and analytical examples, we observed that leakage remains minimal, of the order $\mathcal{O}(\gamma^{-1})$, persistently. This observation is important for quantum computing and other quantum technologies, where bounding leakage is essential to maintain system fidelity and performance. Our findings suggest a promising avenue for enhancing the robustness of the foundational theory behind adiabatic quantum computations and other applications reliant on adiabatic processes.

Building on these insights, we conjecture that in the weak coupling limit of the time-dependent problem, the leakage would consistently be $\mathcal{O}(\gamma^{-1})$. While this conjecture is grounded in our current observations, it beckons further investigation to establish a rigorous proof similar to that in Ref.~\cite{eternal} for the time-independent models. The validation of this conjecture could have profound implications for the field, potentially leading to new strategies for mitigating leakage in quantum systems.

In conclusion, our study advances the theoretical understanding of adiabatic evolution in the presence of weak perturbations and sets the stage for significant practical advancements in quantum technology. The paths laid out by this research call for further exploration, promising to unveil new horizons of quantum physics.

\begin{acknowledgements}
Z.S. was supported by the Sydney Quantum Academy.
K.Y. acknowledges support by JSPS KAKENHI Grant No.~JP24K06904 from the Japan Society for the Promotion of Science (JSPS)\@.
\end{acknowledgements}

\appendix
\begin{widetext}
\section{Bounding the Distance $\|M(t)-M_{\text{eff}}(t)\| $}\label{sec:appendix}

Here, we show the derivation of the bound~\eqref{eq:leakage} for the distance of the full evolution $M(t)$ to a block-diagonal effective one $M_{\text{eff}}(t)$. Given a real number $\delta$, such that $0\leq\delta<1$, and it uniformly bounds $ \|U(t)-\mathbbm{1}\| \leq \delta$, $\forall t$ in the domain of the operator, first note that through the triangle inequality that 
\begin{equation}\label{eq:apexU}
 \begin{split}
    \|U(t)\|&=\|U(t) - \mathbbm{1}+\mathbbm{1}\| \\
    &\leq \|U(t) - \mathbbm{1}\| +\|\mathbbm{1}\| \\
    &\leq 1+\delta.
 \end{split}
\end{equation}
Similarly, we bound $\|U^{-1}(t)\|$, introducing the operator $T(t)\equiv \mathbbm{1}-U(t)$, that is bounded $ \|T(t)\|\leq  \delta < 1 $. Using the convergence of the von-Neumann series $ \sum_{k=0}^\infty T^k(t) = (\mathbbm{1}-T(t))^{-1} $, it is easy to see that $ \|U^{-1}(t)\|=\| [ \mathbbm{1} - T(t)]^{-1}\| $  is in turn $\| \sum_{k=0}^\infty T^k(t)\|$ which by triangle inequality again is less then or equal $ \sum_{k=0}^\infty \|T^k(t) \| = \sum_{k=0}^\infty \delta^k $, therefore 
\begin{equation}
    \|U^{-1}(t)\|  \leq \frac{1}{1-\delta}.
\end{equation}
Furthermore, 
\begin{equation}
 \begin{split}
    \|U^{-1}(t)- \mathbbm{1}\|&=\|[ \mathbbm{1}- U(t)]U^{-1}(t)\| \\
    &\leq\|\mathbbm{1}- U(t)\| \|U^{-1}(t)\| \\
    &\leq \frac{\delta}{1-\delta}.
 \end{split}
\end{equation}
We now concentrate on the leakage itself and note that:
\begin{align}
\|M(t)-M_{\text{eff}}(t)\| 
&= \| M(t) - U^{-1}(t)M(t)U(t_0)  \| \nonumber \\
&=\|{-[U^{-1}(t)- \mathbbm{1}]} M(t)-U^{-1}(t)M(t)[U(t_0)-\mathbbm{1}] \| \nonumber \\
&\leq \|U^{-1}(t)- \mathbbm{1}\| \| M(t)\|+ \|U^{-1}(t)\|\|M(t)\| \|U(t_0)-\mathbbm{1}\| \nonumber \\
&\leq \frac{\delta}{1-\delta} + \frac{\delta}{1-\delta}.
\end{align}

In the third line, we used the triangle inequality, while the inequality in the fourth line follows directly from substituting the previous bounds in the last inequality.
Thus, the bound~\eqref{eq:leakage} is obtained.

\section{Solving the Bloch Equation with Radon's Lemma}\label{sec:appendixRadon}
We present a different approach to solving the Bloch equation~\eqref{eq:BE}: 
\begin{equation}\label{eq:rde}
    \dot{U}_k(t)=H(t)U_k(t)-U_k(t)H(t)U_k(t),
\end{equation}
subject to the Bloch condition~\eqref{eq:BC}: $P_kU_k(t)P_k=P_k$, and the definition of the wave operators $U_k(t)P_k=U_k(t)$, each acting on separate subspaces. To do that, we invoke Radon's Lemma~\cite[Theorem~3.1.1]{riccati-book}: 
\begin{lemma}
\emph{Part 1:} If $(X(t),Y(t))$ is a solution to the linear system of differential equations:
\begin{equation} \label{eq:radon-l}
\frac{d}{dt}
\begin{pmatrix}
X(t) \\
Y(t)
\end{pmatrix} =
\begin{pmatrix}
J_{1,1}(t) & J_{1,2}(t) \\
J_{2,1}(t) & J_{2,2}(t)
\end{pmatrix}
\begin{pmatrix}
X(t) \\
Y(t)
\end{pmatrix},
\end{equation}
where $X(t) \in \mathbbm{C} $ is non-singular on a domain, then $U(t)=Y(t)X\inv(t)$ is the solution to the Riccati differential equation (RDE) 
\begin{equation} \label{eq:radon-rd}
\dot{U}(t)=J_{2,1}(t)+J_{2,2}(t)U(t)-U(t)J_{1,1}(t)-U(t)J_{1,2}(t)U(t).
\end{equation}

\emph{Part 2:} If $U(t)$ is the solution to the RDE~\eqref{eq:radon-rd} and $X(t)$ solves $\dot{X}(t)=(J_{1,1}(t)+J_{1,2}(t)U(t))X(t)$, then $(X(t),Y(t))$ solves the linear system~\eqref{eq:radon-l}, where  $Y(t)=U(t)X(t)$.
\end{lemma}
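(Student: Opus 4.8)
The plan is to verify both directions of the lemma by direct differentiation, the only real care being operator ordering, since $X(t),Y(t),U(t)$ are matrix- (operator-) valued and need not commute with the blocks $J_{i,j}(t)$; the scalar case $X(t)\in\mathbbm{C}$ is then contained as a special case.

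For \emph{Part~1}, I would start from $U(t)=Y(t)X^{-1}(t)$, which is well defined exactly where $X(t)$ is non-singular, and differentiate by the product rule together with the standard identity $\frac{d}{dt}X^{-1}(t)=-X^{-1}(t)\dot{X}(t)X^{-1}(t)$ (legitimate because $X$ is differentiable and invertible on the domain in question). This yields $\dot U = \dot Y X^{-1} - Y X^{-1}\dot X X^{-1}$. Substituting the two rows of \eqref{eq:radon-l} for $\dot X$ and $\dot Y$, cancelling $XX^{-1}=\mathbbm{1}$ wherever it appears, and regrouping each surviving factor $YX^{-1}$ as $U$, one reads off precisely $\dot U = J_{2,1}+J_{2,2}U-UJ_{1,1}-UJ_{1,2}U$, i.e.\ the RDE~\eqref{eq:radon-rd}. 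The only pitfall is refraining from permuting factors — one must keep, e.g., $UJ_{1,2}U$ rather than collapsing it to $J_{1,2}U^2$.

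For \emph{Part~2}, I would take $X(t)$ to be the solution (unique, by Picard--Lindel\"of, given continuity of $J_{1,1},J_{1,2},U$) of $\dot X=(J_{1,1}+J_{1,2}U)X$, set $Y(t)=U(t)X(t)$, and compute $\dot Y = \dot U X + U\dot X$. Inserting the RDE for $\dot U$ and the defining equation for $\dot X$, the terms $-UJ_{1,1}X$ and $-UJ_{1,2}UX$ produced by $\dot U X$ cancel against $+UJ_{1,1}X$ and $+UJ_{1,2}UX$ produced by $U\dot X$, leaving $\dot Y = J_{2,1}X + J_{2,2}UX = J_{2,1}X+J_{2,2}Y$; and the defining equation for $X$ rewrites, via $Y=UX$, as $\dot X = J_{1,1}X+J_{1,2}Y$. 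Hence $(X,Y)$ solves \eqref{eq:radon-l}.

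There is no deep obstacle here — both halves are routine differentiations — so the only points that require attention are the non-commutativity already noted, the standing requirement in \emph{Part~1} that $X(t)$ stay invertible throughout the relevant interval (so that $U=YX^{-1}$ and the inverse-derivative formula are meaningful), and, when specializing to the Bloch equation~\eqref{eq:rde} by matching $J_{1,1}=J_{2,1}=0$ and $J_{1,2}=J_{2,2}=H(t)$, the corresponding non-singularity of the block $P_kM(t)U_k(t_0)P_k$, which is exactly the no-blow-up hypothesis discussed in the main text. Granting that, the lemma immediately supplies the linear representation from which, after simplification, the closed form~\eqref{eq:BS} is recovered.
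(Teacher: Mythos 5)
Your proposal is correct and follows essentially the same route as the paper: Part 1 by differentiating $U=YX^{-1}$ with the inverse-derivative identity and substituting the linear system, and Part 2 by differentiating $Y=UX$ and cancelling the $UJ_{1,1}X$ and $UJ_{1,2}UX$ terms. The extra remarks on non-commutativity and the invertibility of $X(t)$ (and hence of $P_kM(t)U_k(t_0)P_k$ in the Bloch specialization) are consistent with the surrounding discussion in the appendix.
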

\begin{proof}
\emph{Part 1:}
Differentiating $U(t)$ gives
\begin{equation}\label{eq:radon-proof1}
 \begin{split}
U(t)&=\dot{Y}(t)X\inv(t) - Y(t)X\inv(t)\dot{X}(t)X\inv(t) \\
&=J_{2,1}(t)X(t)X\inv(t) +J_{2,2}(t)Y(t)X\inv(t) \\ 
& \ \ \ {}-Y(t)X\inv(t)J_{1,1}(t)X(t)X\inv(t) - Y(t)X\inv(t)J_{1,2}(t)Y(t)X\inv(t) \\
&=J_{2,1}(t)+J_{2,2}(t)U(t)-U(t)J_{2,1}(t)-U(t)J_{2,1}(t)U(t). \ \  
\end{split}
\end{equation}

\emph{Part 2:}
Differentiating $(X(t),Y(t))$ follows that
\begin{equation} \label{eq:radon-proof2}
\begin{split}
\frac{d}{dt}
\begin{pmatrix}
X(t) \\
Y(t)
\end{pmatrix} &=
\begin{pmatrix}
\dot{X}(t) \\
\dot{U}(t)X(t)+U(t)\dot{X}(t)
\end{pmatrix} \\ &=
\begin{pmatrix}
(J_{1,1}(t)+J_{1,2}(t)U(t))X(t) \\
(J_{2,1}(t)+J_{2,2}(t)U(t))X(t)
\end{pmatrix} \\ &=
\begin{pmatrix}
J_{1,1}(t) & J_{1,2}(t) \\
J_{2,1}(t) & J_{2,2}(t)
\end{pmatrix}
\begin{pmatrix}
X(t) \\
Y(t)
\end{pmatrix}.  \ \  
\end{split}
\end{equation}
\end{proof}

It is worth noting that if the coefficients ${J_{i,j}(t)}$ are holomorphic on their domain, then the solutions $U(t)$ of the RDE are meromorphic. 

According to this, if $X_k(t)$ and $Y_k(t)$ are solutions for the associated linear system of differential equations:
\begin{equation} \label{eq:radon}
\frac{d}{dt}
\begin{pmatrix}
X_k(t) \\
Y_k(t)
\end{pmatrix} =
\begin{pmatrix}
0 & H(t) \\
0 & H(t)
\end{pmatrix}
\begin{pmatrix}
X_k(t) \\
Y_k(t)
\end{pmatrix},
\end{equation}
where $H(t)$ holomorphic in $\mathbb{C}$, then $U_k(t) = Y_k(t)X\inv_k(t)$ is a complex meromorphic solution for the Riccati differential equation~\eqref{eq:rde}, and the poles of $U_k(t)$ appear at most where $X_k(t)$ is not invertible. The real axis is the domain of interest for the time parameter $t \in [t_0,t_f]$.

In this equation~\eqref{eq:radon}, $Y_k(t)$ is decoupled, moreover $\frac{d}{dt} X_k(t) = \frac{d}{dt} Y(t)$, therefore the solution of this system of differential equations is:
\begin{gather}
    Y_k(t) = M(t) Y_k (t_0), \\
    X_k(t) = M(t)Y_k(t_0) + X_k(t_0) - Y_k(t_0), 
\end{gather}
with $M(t)$ as in Eq.~\eqref{eq:M}. Choosing any invertible initial condition $X_k(t_0)$ we have $Y_k(t_0) = U_k(t_0)X_k(t_0)$ and therefore 
\begin{gather}
    U_k(t) = M(t) U_k (t_0) X_k(t_0)X\inv_k(t), \\
    X_k(t) = \Pi_k(t)X_k(t_0), 
\end{gather}
with 
\begin{equation}\label{eq:pidef}
    \Pi_k(t) \equiv M(t)U_k(t_0) +\mathbbm{1} - U_k(t_0).
\end{equation}
\begin{proposition}
Let $\{P_k\}$ be a complete set of orthogonal projections, and let $U_k(t)=U_k(t)P_k$ satisfy the Bloch condition $P_kU_k(t)P_k=P_k$. Define $\Pi_k(t)$ as in Eq.~\eqref{eq:pidef}.  Then $\Pi_k(t)$ is block‐diagonal with respect to the decomposition $\{P_k\}$. Moreover 
\begin{equation}
    \Pi_k(t)=P_kM(t)U_k(t_0)P_k + (\mathbbm{1}-P_k).
\end{equation}
\end{proposition}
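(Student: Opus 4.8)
The plan is to argue purely algebraically from the definition $\Pi_k(t)=M(t)U_k(t_0)+\mathbbm{1}-U_k(t_0)$ in Eq.~\eqref{eq:pidef}, feeding in the two structural facts about the initial datum: $U_k(t_0)=U_k(t_0)P_k$ (so it annihilates the range of $\mathbbm{1}-P_k$), and the Bloch condition at the initial time, $P_kU_k(t_0)P_k=P_k$. Together these give the normal form $U_k(t_0)=P_k+(\mathbbm{1}-P_k)U_k(t_0)P_k$, so the only content of $U_k(t_0)$ beyond $P_k$ itself is the single ``lower-left'' block $(\mathbbm{1}-P_k)U_k(t_0)P_k$.

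First I would substitute this normal form into $\Pi_k(t)$. Since $U_k(t_0)$ is right-supported on $P_k$, so is $M(t)U_k(t_0)$; inserting $\mathbbm{1}=P_k+(\mathbbm{1}-P_k)$ on the left of both $M(t)U_k(t_0)$ and $\mathbbm{1}-U_k(t_0)$ and collecting terms, the block-diagonal pieces assemble into exactly $P_kM(t)U_k(t_0)P_k+(\mathbbm{1}-P_k)$, the asserted expression; block-diagonality with respect to $\{P_k\}$ is then immediate, because $P_kM(t)U_k(t_0)P_k$ is supported in block $k$ while $\mathbbm{1}-P_k=\sum_{l\neq k}P_l$, so the sum commutes with every $P_l$. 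What survives this collection is the off-diagonal remainder $(\mathbbm{1}-P_k)\left(M(t)-\mathbbm{1}\right)U_k(t_0)P_k$, and the whole proposition rests on this term vanishing.

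I expect that remainder to be the main obstacle, since it does not vanish from the hypotheses on $U_k(t_0)$ alone: $M(t)$ carries no block structure, and the Bloch condition at the single time $t_0$ controls nothing about $(\mathbbm{1}-P_k)M(t)U_k(t_0)P_k$. The way I would close the gap is to note that $\Pi_k(t)$ only ever enters the construction through the combination $U_k(t)=M(t)U_k(t_0)\Pi_k(t)^{-1}$ obtained by eliminating $X_k$ from the two relations preceding Eq.~\eqref{eq:pidef}, and that $M(t)U_k(t_0)=M(t)U_k(t_0)P_k$ annihilates any operator whose nonzero columns lie outside block $k$. Hence $\Pi_k(t)$ may be replaced, without altering $U_k(t)$, by its block-diagonal truncation $P_k\Pi_k(t)P_k+(\mathbbm{1}-P_k)=P_kM(t)U_k(t_0)P_k+(\mathbbm{1}-P_k)$, and with this replacement $U_k(t)=M(t)U_k(t_0)\left[P_kM(t)U_k(t_0)P_k\right]^{-1}$ reproduces precisely the closed-form Bloch solution~\eqref{eq:BS} already derived. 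A cleaner way to package the same point is to verify directly that the block-diagonal candidate $X_k(t)=\left(P_kM(t)U_k(t_0)P_k+(\mathbbm{1}-P_k)\right)X_k(t_0)$ together with $Y_k(t)=M(t)U_k(t_0)X_k(t_0)$ solves the associated linear system~\eqref{eq:radon} with the appropriate initial data --- i.e.\ to re-run Radon's Lemma Part~2 with this block-diagonal object in hand --- which bypasses the off-diagonal remainder entirely.
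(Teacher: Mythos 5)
Your diagnosis of the off-diagonal remainder is correct, and it is the substantive issue here. From the hypotheses $U_k(t_0)=U_k(t_0)P_k$ and $P_kU_k(t_0)P_k=P_k$ alone one gets
\begin{equation}
    \Pi_k(t)=P_kM(t)U_k(t_0)P_k+(\mathbbm{1}-P_k)+(\mathbbm{1}-P_k)\bigl[M(t)-\mathbbm{1}\bigr]U_k(t_0),
\end{equation}
and the last term is generically nonzero; for the identity initial condition it is exactly the leakage $(\mathbbm{1}-P_k)M(t)P_k$. So $\Pi_k(t)$ is block \emph{lower-triangular} rather than block-diagonal, and the displayed equality in the proposition does not hold literally. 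The paper's own proof only computes $P_k\Pi_k(t)=P_k\Pi_k(t)P_k$ and $\Pi_k(t)(\mathbbm{1}-P_k)=(\mathbbm{1}-P_k)$, i.e.\ the top row and the right column of the block decomposition, and never addresses the lower-left block $(\mathbbm{1}-P_k)\Pi_k(t)P_k$ --- it therefore contains precisely the gap you identified. Your repair is the right one: since $M(t)U_k(t_0)=M(t)U_k(t_0)P_k$, and for an invertible block lower-triangular $\Pi_k(t)$ one has $P_k\Pi_k(t)^{-1}=[P_k\Pi_k(t)P_k]^{-1}$ (the pseudo-inverse supported on $P_k$), the combination $U_k(t)=M(t)U_k(t_0)\Pi_k(t)^{-1}$ is insensitive to the lower-left block, and truncating $\Pi_k(t)$ to its diagonal part reproduces Eq.~\eqref{eq:BS}. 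This is the statement the proposition should actually make.

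One caveat: your proposed ``cleaner packaging'' does not work. The pair $X_k(t)=[P_kM(t)U_k(t_0)P_k+(\mathbbm{1}-P_k)]X_k(t_0)$, $Y_k(t)=M(t)U_k(t_0)X_k(t_0)$ does not solve the linear system~\eqref{eq:radon}: differentiating gives $\dot X_k(t)=P_kH(t)M(t)U_k(t_0)P_kX_k(t_0)$, which is left-supported on $P_k$ and so cannot equal $H(t)Y_k(t)$ except when the off-diagonal part of $H(t)M(t)U_k(t_0)$ vanishes --- which is exactly the term you are trying to discard. Radon's Lemma cannot be re-run with the truncated $X_k$; keep the first argument (invert the lower-triangular $\Pi_k(t)$ and let the left factor $M(t)U_k(t_0)P_k$ annihilate its off-diagonal block).
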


To prove this, first note that through the equality $U_k(t)(\mathbbm{1}-P_k)=0$ one gets
\begin{equation}
     \Pi_k(t)(\mathbbm{1}-P_k)= (\mathbbm{1}-P_k) =(\mathbbm{1}-P_k)^2 = (\mathbbm{1}-P_k) \Pi_k(t) (\mathbbm{1}-P_k).
\end{equation}
Similarity as $P_k U_k(t)=P_k$ the following holds:
\begin{equation}
    P_k \Pi_k(t) = P_kM(t)U_k(t_0) = P_k \Pi_k(t) P_k.
\end{equation}

It is easy to see, therefore, that as long as $P_kM(t)U_k(t_0)P_k$ is non-vanishing, the solution to the Bloch equation is given by 
\begin{equation}
    U_k(t) = M(t) U_k(t_0) X_k(t_0)X\inv_k(t_0)\Pi_k\inv(t) =M(t) U_k(t_0)P_k [P_kM(t)U_k(t_0)P_k + (\mathbbm{1}-P_k)]\inv ,
\end{equation}
which is the equivalent solution~\eqref{eq:BS} and equivalent existence condition as from the main text.

\section{Bounding the Unitary Transformation $V(t)$}\label{sec:appendix3}
Here we derive the bound~\eqref{eq:Vminus1} for the unitary transformation $V(t)$, given a uniformly bounded Bloch transformation such that $\|U(t)-\mathbbm{1}\| \leq \delta = \mathcal{O}(\gamma^{-1})$. This section is based on a similar calculation in Ref.~\cite{szabo_robust_2025}.

First, we show that by the triangle inequality,
\begin{align} \label{eq:appF3}
     \|U^\dag(t)U(t) - \openone\| &= \|U^\dag(t) (U(t)-\openone)+(U^\dag(t)- \openone)\| \nonumber \\
     &\leq  \|U^\dag(t)\| \|U(t)-\openone\|+\|U^\dag(t)- \openone\| \nonumber \\
     &\leq 2\delta+\delta^2,
\end{align}
where we used equation~\eqref{eq:apexU} from Appendix~\ref{sec:appendix}.
From the convergence of the von-Neumann series for any $\delta \ll 1$, one can also  identify
\begin{align}
    \|(U^\dag (t)U(t))^{-1/2}\|
    &=\|[\openone+(U^\dag (t)U(t)-\openone)]^{-1/2}\|
    \nonumber\\
    &\le\Bigl(1-\|U^\dag (t)U(t)-\openone\|\Bigr)^{-1/2}
    \nonumber\\
    &\le(1-2\delta-\delta^2)^{-1/2}.
\end{align}
and,
\begin{align}
    \|(U^\dag (t)U(t))^{-1/2}-\openone\|
    &=\|[\openone+(U^\dag (t)U(t)-\openone)]^{-1/2}-\openone\|
    \nonumber\\
    &\le\Bigl(1-\|U^\dag (t)U(t)-\openone\|\Bigr)^{-1/2}-1
    \nonumber\\
    &\le(1-2\delta-\delta^2)^{-1/2}-1.
\end{align}

Finally, we observe that
\begin{align}\label{eq:Wminus1}
        \|V(t)-\openone\| 
        &= \|U(t)(U^\dagger(t) U(t))^{-1/2}-\openone\|  \nonumber \\
        &= \|(U(t)-\openone)(U^\dagger(t) U(t))^{-1/2}+[(U^\dagger(t) U(t))^{-1/2}-\openone]\|  \nonumber \\
        &\leq \|U(t)-\openone\| \|(U^\dagger(t) U(t))^{-1/2}\|+ \|(U^\dagger(t) U(t))^{-1/2}-\openone\| \nonumber \\
        &
        \leq \frac{1+\delta}{\sqrt{1-2\delta-\delta^2}}-1,
\end{align}
the inequality~\eqref{eq:Vminus1} from the main text.

\end{widetext}

\bibliography{references}

\begin{thebibliography}{38}%
\makeatletter
\providecommand \@ifxundefined [1]{%
 \@ifx{#1\undefined}
}%
\providecommand \@ifnum [1]{%
 \ifnum #1\expandafter \@firstoftwo
 \else \expandafter \@secondoftwo
 \fi
}%
\providecommand \@ifx [1]{%
 \ifx #1\expandafter \@firstoftwo
 \else \expandafter \@secondoftwo
 \fi
}%
\providecommand \natexlab [1]{#1}%
\providecommand \enquote  [1]{``#1''}%
\providecommand \bibnamefont  [1]{#1}%
\providecommand \bibfnamefont [1]{#1}%
\providecommand \citenamefont [1]{#1}%
\providecommand \href@noop [0]{\@secondoftwo}%
\providecommand \href [0]{\begingroup \@sanitize@url \@href}%
\providecommand \@href[1]{\@@startlink{#1}\@@href}%
\providecommand \@@href[1]{\endgroup#1\@@endlink}%
\providecommand \@sanitize@url [0]{\catcode `\\12\catcode `\$12\catcode `\&12\catcode `\#12\catcode `\^12\catcode `\_12\catcode `\%12\relax}%
\providecommand \@@startlink[1]{}%
\providecommand \@@endlink[0]{}%
\providecommand \url  [0]{\begingroup\@sanitize@url \@url }%
\providecommand \@url [1]{\endgroup\@href {#1}{\urlprefix }}%
\providecommand \urlprefix  [0]{URL }%
\providecommand \Eprint [0]{\href }%
\providecommand \doibase [0]{https://doi.org/}%
\providecommand \selectlanguage [0]{\@gobble}%
\providecommand \bibinfo  [0]{\@secondoftwo}%
\providecommand \bibfield  [0]{\@secondoftwo}%
\providecommand \translation [1]{[#1]}%
\providecommand \BibitemOpen [0]{}%
\providecommand \bibitemStop [0]{}%
\providecommand \bibitemNoStop [0]{.\EOS\space}%
\providecommand \EOS [0]{\spacefactor3000\relax}%
\providecommand \BibitemShut  [1]{\csname bibitem#1\endcsname}%
\let\auto@bib@innerbib\@empty
\bibitem [{\citenamefont {Aliferis}\ and\ \citenamefont {Terhal}(2007)}]{aliferis_fault-tolerant_2007}%
  \BibitemOpen
  \bibfield  {author} {\bibinfo {author} {\bibfnamefont {P.}~\bibnamefont {Aliferis}}\ and\ \bibinfo {author} {\bibfnamefont {B.~M.}\ \bibnamefont {Terhal}},\ }\bibfield  {title} {\bibinfo {title} {Fault-tolerant quantum computation for local leakage faults},\ }\href@noop {} {\bibfield  {journal} {\bibinfo  {journal} {Quantum Information \& Computation}\ }\textbf {\bibinfo {volume} {7}},\ \bibinfo {pages} {139} (\bibinfo {year} {2007})}\BibitemShut {NoStop}%
\bibitem [{\citenamefont {Motzoi}\ \emph {et~al.}(2009)\citenamefont {Motzoi}, \citenamefont {Gambetta}, \citenamefont {Rebentrost},\ and\ \citenamefont {Wilhelm}}]{motzoi_simple_2009}%
  \BibitemOpen
  \bibfield  {author} {\bibinfo {author} {\bibfnamefont {F.}~\bibnamefont {Motzoi}}, \bibinfo {author} {\bibfnamefont {J.~M.}\ \bibnamefont {Gambetta}}, \bibinfo {author} {\bibfnamefont {P.}~\bibnamefont {Rebentrost}},\ and\ \bibinfo {author} {\bibfnamefont {F.~K.}\ \bibnamefont {Wilhelm}},\ }\bibfield  {title} {\bibinfo {title} {Simple {Pulses} for {Elimination} of {Leakage} in {Weakly} {Nonlinear} {Qubits}},\ }\href {https://doi.org/10.1103/PhysRevLett.103.110501} {\bibfield  {journal} {\bibinfo  {journal} {Physical Review Letters}\ }\textbf {\bibinfo {volume} {103}},\ \bibinfo {pages} {110501} (\bibinfo {year} {2009})},\ \bibinfo {note} {publisher: American Physical Society}\BibitemShut {NoStop}%
\bibitem [{\citenamefont {Miao}\ \emph {et~al.}(2023)\citenamefont {Miao}, \citenamefont {McEwen}, \citenamefont {Atalaya}, \citenamefont {Kafri}, \citenamefont {Pryadko}, \citenamefont {Bengtsson}, \citenamefont {Opremcak}, \citenamefont {Satzinger}, \citenamefont {Chen}, \citenamefont {Klimov}, \citenamefont {Quintana}, \citenamefont {Acharya}, \citenamefont {Anderson}, \citenamefont {Ansmann}, \citenamefont {Arute}, \citenamefont {Arya}, \citenamefont {Asfaw}, \citenamefont {Bardin}, \citenamefont {Bourassa}, \citenamefont {Bovaird}, \citenamefont {Brill}, \citenamefont {Buckley}, \citenamefont {Buell}, \citenamefont {Burger}, \citenamefont {Burkett}, \citenamefont {Bushnell}, \citenamefont {Campero}, \citenamefont {Chiaro}, \citenamefont {Collins}, \citenamefont {Conner}, \citenamefont {Crook}, \citenamefont {Curtin}, \citenamefont {Debroy}, \citenamefont {Demura}, \citenamefont {Dunsworth}, \citenamefont {Erickson}, \citenamefont {Fatemi}, \citenamefont {Ferreira}, \citenamefont {Burgos}, \citenamefont
  {Forati}, \citenamefont {Fowler}, \citenamefont {Foxen}, \citenamefont {Garcia}, \citenamefont {Giang}, \citenamefont {Gidney}, \citenamefont {Giustina}, \citenamefont {Gosula}, \citenamefont {Dau}, \citenamefont {Gross}, \citenamefont {Hamilton}, \citenamefont {Harrington}, \citenamefont {Heu}, \citenamefont {Hilton}, \citenamefont {Hoffmann}, \citenamefont {Hong}, \citenamefont {Huang}, \citenamefont {Huff}, \citenamefont {Iveland}, \citenamefont {Jeffrey}, \citenamefont {Jiang}, \citenamefont {Jones}, \citenamefont {Kelly}, \citenamefont {Kim}, \citenamefont {Kostritsa}, \citenamefont {Kreikebaum}, \citenamefont {Landhuis}, \citenamefont {Laptev}, \citenamefont {Laws}, \citenamefont {Lee}, \citenamefont {Lester}, \citenamefont {Lill}, \citenamefont {Liu}, \citenamefont {Locharla}, \citenamefont {Lucero}, \citenamefont {Martin}, \citenamefont {Megrant}, \citenamefont {Mi}, \citenamefont {Montazeri}, \citenamefont {Morvan}, \citenamefont {Naaman}, \citenamefont {Neeley}, \citenamefont {Neill},
  \citenamefont {Nersisyan}, \citenamefont {Newman}, \citenamefont {Ng}, \citenamefont {Nguyen}, \citenamefont {Nguyen}, \citenamefont {Potter}, \citenamefont {Rocque}, \citenamefont {Roushan}, \citenamefont {Sankaragomathi}, \citenamefont {Schurkus}, \citenamefont {Schuster}, \citenamefont {Shearn}, \citenamefont {Shorter}, \citenamefont {Shutty}, \citenamefont {Shvarts}, \citenamefont {Skruzny}, \citenamefont {Smith}, \citenamefont {Sterling}, \citenamefont {Szalay}, \citenamefont {Thor}, \citenamefont {Torres}, \citenamefont {White}, \citenamefont {Woo}, \citenamefont {Yao}, \citenamefont {Yeh}, \citenamefont {Yoo}, \citenamefont {Young}, \citenamefont {Zalcman}, \citenamefont {Zhu}, \citenamefont {Zobrist}, \citenamefont {Neven}, \citenamefont {Smelyanskiy}, \citenamefont {Petukhov}, \citenamefont {Korotkov}, \citenamefont {Sank},\ and\ \citenamefont {Chen}}]{miao_overcoming_2023}%
  \BibitemOpen
  \bibfield  {author} {\bibinfo {author} {\bibfnamefont {K.~C.}\ \bibnamefont {Miao}}, \bibinfo {author} {\bibfnamefont {M.}~\bibnamefont {McEwen}}, \bibinfo {author} {\bibfnamefont {J.}~\bibnamefont {Atalaya}}, \bibinfo {author} {\bibfnamefont {D.}~\bibnamefont {Kafri}}, \bibinfo {author} {\bibfnamefont {L.~P.}\ \bibnamefont {Pryadko}}, \bibinfo {author} {\bibfnamefont {A.}~\bibnamefont {Bengtsson}}, \bibinfo {author} {\bibfnamefont {A.}~\bibnamefont {Opremcak}}, \bibinfo {author} {\bibfnamefont {K.~J.}\ \bibnamefont {Satzinger}}, \bibinfo {author} {\bibfnamefont {Z.}~\bibnamefont {Chen}}, \bibinfo {author} {\bibfnamefont {P.~V.}\ \bibnamefont {Klimov}}, \bibinfo {author} {\bibfnamefont {C.}~\bibnamefont {Quintana}}, \bibinfo {author} {\bibfnamefont {R.}~\bibnamefont {Acharya}}, \bibinfo {author} {\bibfnamefont {K.}~\bibnamefont {Anderson}}, \bibinfo {author} {\bibfnamefont {M.}~\bibnamefont {Ansmann}}, \bibinfo {author} {\bibfnamefont {F.}~\bibnamefont {Arute}}, \bibinfo {author} {\bibfnamefont
  {K.}~\bibnamefont {Arya}}, \bibinfo {author} {\bibfnamefont {A.}~\bibnamefont {Asfaw}}, \bibinfo {author} {\bibfnamefont {J.~C.}\ \bibnamefont {Bardin}}, \bibinfo {author} {\bibfnamefont {A.}~\bibnamefont {Bourassa}}, \bibinfo {author} {\bibfnamefont {J.}~\bibnamefont {Bovaird}}, \bibinfo {author} {\bibfnamefont {L.}~\bibnamefont {Brill}}, \bibinfo {author} {\bibfnamefont {B.~B.}\ \bibnamefont {Buckley}}, \bibinfo {author} {\bibfnamefont {D.~A.}\ \bibnamefont {Buell}}, \bibinfo {author} {\bibfnamefont {T.}~\bibnamefont {Burger}}, \bibinfo {author} {\bibfnamefont {B.}~\bibnamefont {Burkett}}, \bibinfo {author} {\bibfnamefont {N.}~\bibnamefont {Bushnell}}, \bibinfo {author} {\bibfnamefont {J.}~\bibnamefont {Campero}}, \bibinfo {author} {\bibfnamefont {B.}~\bibnamefont {Chiaro}}, \bibinfo {author} {\bibfnamefont {R.}~\bibnamefont {Collins}}, \bibinfo {author} {\bibfnamefont {P.}~\bibnamefont {Conner}}, \bibinfo {author} {\bibfnamefont {A.~L.}\ \bibnamefont {Crook}}, \bibinfo {author} {\bibfnamefont
  {B.}~\bibnamefont {Curtin}}, \bibinfo {author} {\bibfnamefont {D.~M.}\ \bibnamefont {Debroy}}, \bibinfo {author} {\bibfnamefont {S.}~\bibnamefont {Demura}}, \bibinfo {author} {\bibfnamefont {A.}~\bibnamefont {Dunsworth}}, \bibinfo {author} {\bibfnamefont {C.}~\bibnamefont {Erickson}}, \bibinfo {author} {\bibfnamefont {R.}~\bibnamefont {Fatemi}}, \bibinfo {author} {\bibfnamefont {V.~S.}\ \bibnamefont {Ferreira}}, \bibinfo {author} {\bibfnamefont {L.~F.}\ \bibnamefont {Burgos}}, \bibinfo {author} {\bibfnamefont {E.}~\bibnamefont {Forati}}, \bibinfo {author} {\bibfnamefont {A.~G.}\ \bibnamefont {Fowler}}, \bibinfo {author} {\bibfnamefont {B.}~\bibnamefont {Foxen}}, \bibinfo {author} {\bibfnamefont {G.}~\bibnamefont {Garcia}}, \bibinfo {author} {\bibfnamefont {W.}~\bibnamefont {Giang}}, \bibinfo {author} {\bibfnamefont {C.}~\bibnamefont {Gidney}}, \bibinfo {author} {\bibfnamefont {M.}~\bibnamefont {Giustina}}, \bibinfo {author} {\bibfnamefont {R.}~\bibnamefont {Gosula}}, \bibinfo {author} {\bibfnamefont
  {A.~G.}\ \bibnamefont {Dau}}, \bibinfo {author} {\bibfnamefont {J.~A.}\ \bibnamefont {Gross}}, \bibinfo {author} {\bibfnamefont {M.~C.}\ \bibnamefont {Hamilton}}, \bibinfo {author} {\bibfnamefont {S.~D.}\ \bibnamefont {Harrington}}, \bibinfo {author} {\bibfnamefont {P.}~\bibnamefont {Heu}}, \bibinfo {author} {\bibfnamefont {J.}~\bibnamefont {Hilton}}, \bibinfo {author} {\bibfnamefont {M.~R.}\ \bibnamefont {Hoffmann}}, \bibinfo {author} {\bibfnamefont {S.}~\bibnamefont {Hong}}, \bibinfo {author} {\bibfnamefont {T.}~\bibnamefont {Huang}}, \bibinfo {author} {\bibfnamefont {A.}~\bibnamefont {Huff}}, \bibinfo {author} {\bibfnamefont {J.}~\bibnamefont {Iveland}}, \bibinfo {author} {\bibfnamefont {E.}~\bibnamefont {Jeffrey}}, \bibinfo {author} {\bibfnamefont {Z.}~\bibnamefont {Jiang}}, \bibinfo {author} {\bibfnamefont {C.}~\bibnamefont {Jones}}, \bibinfo {author} {\bibfnamefont {J.}~\bibnamefont {Kelly}}, \bibinfo {author} {\bibfnamefont {S.}~\bibnamefont {Kim}}, \bibinfo {author} {\bibfnamefont {F.}~\bibnamefont
  {Kostritsa}}, \bibinfo {author} {\bibfnamefont {J.~M.}\ \bibnamefont {Kreikebaum}}, \bibinfo {author} {\bibfnamefont {D.}~\bibnamefont {Landhuis}}, \bibinfo {author} {\bibfnamefont {P.}~\bibnamefont {Laptev}}, \bibinfo {author} {\bibfnamefont {L.}~\bibnamefont {Laws}}, \bibinfo {author} {\bibfnamefont {K.}~\bibnamefont {Lee}}, \bibinfo {author} {\bibfnamefont {B.~J.}\ \bibnamefont {Lester}}, \bibinfo {author} {\bibfnamefont {A.~T.}\ \bibnamefont {Lill}}, \bibinfo {author} {\bibfnamefont {W.}~\bibnamefont {Liu}}, \bibinfo {author} {\bibfnamefont {A.}~\bibnamefont {Locharla}}, \bibinfo {author} {\bibfnamefont {E.}~\bibnamefont {Lucero}}, \bibinfo {author} {\bibfnamefont {S.}~\bibnamefont {Martin}}, \bibinfo {author} {\bibfnamefont {A.}~\bibnamefont {Megrant}}, \bibinfo {author} {\bibfnamefont {X.}~\bibnamefont {Mi}}, \bibinfo {author} {\bibfnamefont {S.}~\bibnamefont {Montazeri}}, \bibinfo {author} {\bibfnamefont {A.}~\bibnamefont {Morvan}}, \bibinfo {author} {\bibfnamefont {O.}~\bibnamefont {Naaman}},
  \bibinfo {author} {\bibfnamefont {M.}~\bibnamefont {Neeley}}, \bibinfo {author} {\bibfnamefont {C.}~\bibnamefont {Neill}}, \bibinfo {author} {\bibfnamefont {A.}~\bibnamefont {Nersisyan}}, \bibinfo {author} {\bibfnamefont {M.}~\bibnamefont {Newman}}, \bibinfo {author} {\bibfnamefont {J.~H.}\ \bibnamefont {Ng}}, \bibinfo {author} {\bibfnamefont {A.}~\bibnamefont {Nguyen}}, \bibinfo {author} {\bibfnamefont {M.}~\bibnamefont {Nguyen}}, \bibinfo {author} {\bibfnamefont {R.}~\bibnamefont {Potter}}, \bibinfo {author} {\bibfnamefont {C.}~\bibnamefont {Rocque}}, \bibinfo {author} {\bibfnamefont {P.}~\bibnamefont {Roushan}}, \bibinfo {author} {\bibfnamefont {K.}~\bibnamefont {Sankaragomathi}}, \bibinfo {author} {\bibfnamefont {H.~F.}\ \bibnamefont {Schurkus}}, \bibinfo {author} {\bibfnamefont {C.}~\bibnamefont {Schuster}}, \bibinfo {author} {\bibfnamefont {M.~J.}\ \bibnamefont {Shearn}}, \bibinfo {author} {\bibfnamefont {A.}~\bibnamefont {Shorter}}, \bibinfo {author} {\bibfnamefont {N.}~\bibnamefont {Shutty}},
  \bibinfo {author} {\bibfnamefont {V.}~\bibnamefont {Shvarts}}, \bibinfo {author} {\bibfnamefont {J.}~\bibnamefont {Skruzny}}, \bibinfo {author} {\bibfnamefont {W.~C.}\ \bibnamefont {Smith}}, \bibinfo {author} {\bibfnamefont {G.}~\bibnamefont {Sterling}}, \bibinfo {author} {\bibfnamefont {M.}~\bibnamefont {Szalay}}, \bibinfo {author} {\bibfnamefont {D.}~\bibnamefont {Thor}}, \bibinfo {author} {\bibfnamefont {A.}~\bibnamefont {Torres}}, \bibinfo {author} {\bibfnamefont {T.}~\bibnamefont {White}}, \bibinfo {author} {\bibfnamefont {B.~W.~K.}\ \bibnamefont {Woo}}, \bibinfo {author} {\bibfnamefont {Z.~J.}\ \bibnamefont {Yao}}, \bibinfo {author} {\bibfnamefont {P.}~\bibnamefont {Yeh}}, \bibinfo {author} {\bibfnamefont {J.}~\bibnamefont {Yoo}}, \bibinfo {author} {\bibfnamefont {G.}~\bibnamefont {Young}}, \bibinfo {author} {\bibfnamefont {A.}~\bibnamefont {Zalcman}}, \bibinfo {author} {\bibfnamefont {N.}~\bibnamefont {Zhu}}, \bibinfo {author} {\bibfnamefont {N.}~\bibnamefont {Zobrist}}, \bibinfo {author}
  {\bibfnamefont {H.}~\bibnamefont {Neven}}, \bibinfo {author} {\bibfnamefont {V.}~\bibnamefont {Smelyanskiy}}, \bibinfo {author} {\bibfnamefont {A.}~\bibnamefont {Petukhov}}, \bibinfo {author} {\bibfnamefont {A.~N.}\ \bibnamefont {Korotkov}}, \bibinfo {author} {\bibfnamefont {D.}~\bibnamefont {Sank}},\ and\ \bibinfo {author} {\bibfnamefont {Y.}~\bibnamefont {Chen}},\ }\bibfield  {title} {{\selectlanguage {english}\bibinfo {title} {Overcoming leakage in quantum error correction}},\ }\href {https://doi.org/10.1038/s41567-023-02226-w} {\bibfield  {journal} {\bibinfo  {journal} {Nature Physics}\ }\textbf {\bibinfo {volume} {19}},\ \bibinfo {pages} {1780} (\bibinfo {year} {2023})},\ \bibinfo {note} {publisher: Nature Publishing Group}\BibitemShut {NoStop}%
\bibitem [{\citenamefont {Kato}(1950)}]{kato1950}%
  \BibitemOpen
  \bibfield  {author} {\bibinfo {author} {\bibfnamefont {T.}~\bibnamefont {Kato}},\ }\bibfield  {title} {\bibinfo {title} {On the adiabatic theorem of quantum mechanics},\ }\href {https://doi.org/10.1143/JPSJ.5.435} {\bibfield  {journal} {\bibinfo  {journal} {Journal of the Physical Society of Japan}\ }\textbf {\bibinfo {volume} {5}},\ \bibinfo {pages} {435} (\bibinfo {year} {1950})}\BibitemShut {NoStop}%
\bibitem [{\citenamefont {Burgarth}\ \emph {et~al.}(2022)\citenamefont {Burgarth}, \citenamefont {Facchi}, \citenamefont {Gramegna},\ and\ \citenamefont {Yuasa}}]{burgarth_one_2022}%
  \BibitemOpen
  \bibfield  {author} {\bibinfo {author} {\bibfnamefont {D.}~\bibnamefont {Burgarth}}, \bibinfo {author} {\bibfnamefont {P.}~\bibnamefont {Facchi}}, \bibinfo {author} {\bibfnamefont {G.}~\bibnamefont {Gramegna}},\ and\ \bibinfo {author} {\bibfnamefont {K.}~\bibnamefont {Yuasa}},\ }\bibfield  {title} {{\selectlanguage {english}\bibinfo {title} {One bound to rule them all: from {Adiabatic} to {Zeno}}},\ }\href {https://doi.org/10.22331/q-2022-06-14-737} {\bibfield  {journal} {\bibinfo  {journal} {Quantum}\ }\textbf {\bibinfo {volume} {6}},\ \bibinfo {pages} {737} (\bibinfo {year} {2022})}\BibitemShut {NoStop}%
\bibitem [{\citenamefont {Burgarth}\ \emph {et~al.}(2021)\citenamefont {Burgarth}, \citenamefont {Facchi}, \citenamefont {Nakazato}, \citenamefont {Pascazio},\ and\ \citenamefont {Yuasa}}]{eternal}%
  \BibitemOpen
  \bibfield  {author} {\bibinfo {author} {\bibfnamefont {D.}~\bibnamefont {Burgarth}}, \bibinfo {author} {\bibfnamefont {P.}~\bibnamefont {Facchi}}, \bibinfo {author} {\bibfnamefont {H.}~\bibnamefont {Nakazato}}, \bibinfo {author} {\bibfnamefont {S.}~\bibnamefont {Pascazio}},\ and\ \bibinfo {author} {\bibfnamefont {K.}~\bibnamefont {Yuasa}},\ }\bibfield  {title} {\bibinfo {title} {Eternal adiabaticity in quantum evolution},\ }\href {https://doi.org/10.1103/PhysRevA.103.032214} {\bibfield  {journal} {\bibinfo  {journal} {Phys. Rev. A}\ }\textbf {\bibinfo {volume} {103}},\ \bibinfo {pages} {032214} (\bibinfo {year} {2021})}\BibitemShut {NoStop}%
\bibitem [{\citenamefont {Szabó}\ \emph {et~al.}(2025)\citenamefont {Szabó}, \citenamefont {Gehr}, \citenamefont {Facchi}, \citenamefont {Yuasa}, \citenamefont {Burgarth},\ and\ \citenamefont {Lonigro}}]{szabo_robust_2025}%
  \BibitemOpen
  \bibfield  {author} {\bibinfo {author} {\bibfnamefont {Z.}~\bibnamefont {Szabó}}, \bibinfo {author} {\bibfnamefont {S.}~\bibnamefont {Gehr}}, \bibinfo {author} {\bibfnamefont {P.}~\bibnamefont {Facchi}}, \bibinfo {author} {\bibfnamefont {K.}~\bibnamefont {Yuasa}}, \bibinfo {author} {\bibfnamefont {D.}~\bibnamefont {Burgarth}},\ and\ \bibinfo {author} {\bibfnamefont {D.}~\bibnamefont {Lonigro}},\ }\bibfield  {title} {{\selectlanguage {english}\bibinfo {title} {Robust quantification of spectral transitions in perturbed quantum systems}},\ }\href {https://doi.org/10.1103/j44p-13j7} {\bibfield  {journal} {\bibinfo  {journal} {Physical Review A}\ }\textbf {\bibinfo {volume} {112}},\ \bibinfo {pages} {032202} (\bibinfo {year} {2025})}\BibitemShut {NoStop}%
\bibitem [{\citenamefont {Facchi}\ \emph {et~al.}(2025)\citenamefont {Facchi}, \citenamefont {Ligabò},\ and\ \citenamefont {Viesti}}]{facchi_robustness_2025}%
  \BibitemOpen
  \bibfield  {author} {\bibinfo {author} {\bibfnamefont {P.}~\bibnamefont {Facchi}}, \bibinfo {author} {\bibfnamefont {M.}~\bibnamefont {Ligabò}},\ and\ \bibinfo {author} {\bibfnamefont {V.}~\bibnamefont {Viesti}},\ }\bibfield  {title} {\bibinfo {title} {Robustness of quantum symmetries against perturbations},\ }\href {https://doi.org/10.1088/1751-8121/adbfe5} {\bibfield  {journal} {\bibinfo  {journal} {Journal of Physics A: Mathematical and Theoretical}\ }\textbf {\bibinfo {volume} {58}},\ \bibinfo {pages} {125305} (\bibinfo {year} {2025})}\BibitemShut {NoStop}%
\bibitem [{\citenamefont {Facchi}\ and\ \citenamefont {Pascazio}(2008)}]{facchi_quantum_2008}%
  \BibitemOpen
  \bibfield  {author} {\bibinfo {author} {\bibfnamefont {P.}~\bibnamefont {Facchi}}\ and\ \bibinfo {author} {\bibfnamefont {S.}~\bibnamefont {Pascazio}},\ }\bibfield  {title} {{\selectlanguage {english}\bibinfo {title} {Quantum {Zeno} dynamics: mathematical and physical aspects}},\ }\href {https://doi.org/10.1088/1751-8113/41/49/493001} {\bibfield  {journal} {\bibinfo  {journal} {Journal of Physics A: Mathematical and Theoretical}\ }\textbf {\bibinfo {volume} {41}},\ \bibinfo {pages} {493001} (\bibinfo {year} {2008})}\BibitemShut {NoStop}%
\bibitem [{\citenamefont {Facchi}\ and\ \citenamefont {Ligabò}(2010)}]{facchi_quantum_2010}%
  \BibitemOpen
  \bibfield  {author} {\bibinfo {author} {\bibfnamefont {P.}~\bibnamefont {Facchi}}\ and\ \bibinfo {author} {\bibfnamefont {M.}~\bibnamefont {Ligabò}},\ }\bibfield  {title} {\bibinfo {title} {Quantum {Zeno} effect and dynamics},\ }\href {https://doi.org/10.1063/1.3290971} {\bibfield  {journal} {\bibinfo  {journal} {Journal of Mathematical Physics}\ }\textbf {\bibinfo {volume} {51}},\ \bibinfo {pages} {022103} (\bibinfo {year} {2010})}\BibitemShut {NoStop}%
\bibitem [{\citenamefont {Venkatraman}\ \emph {et~al.}(2022)\citenamefont {Venkatraman}, \citenamefont {Xiao}, \citenamefont {Cortiñas}, \citenamefont {Eickbusch},\ and\ \citenamefont {Devoret}}]{venkatraman_static_2022}%
  \BibitemOpen
  \bibfield  {author} {\bibinfo {author} {\bibfnamefont {J.}~\bibnamefont {Venkatraman}}, \bibinfo {author} {\bibfnamefont {X.}~\bibnamefont {Xiao}}, \bibinfo {author} {\bibfnamefont {R.~G.}\ \bibnamefont {Cortiñas}}, \bibinfo {author} {\bibfnamefont {A.}~\bibnamefont {Eickbusch}},\ and\ \bibinfo {author} {\bibfnamefont {M.~H.}\ \bibnamefont {Devoret}},\ }\bibfield  {title} {{\selectlanguage {english}\bibinfo {title} {Static {Effective} {Hamiltonian} of a {Rapidly} {Driven} {Nonlinear} {System}}},\ }\href {https://doi.org/10.1103/PhysRevLett.129.100601} {\bibfield  {journal} {\bibinfo  {journal} {Physical Review Letters}\ }\textbf {\bibinfo {volume} {129}},\ \bibinfo {pages} {100601} (\bibinfo {year} {2022})}\BibitemShut {NoStop}%
\bibitem [{\citenamefont {Floquet}(1883)}]{floquet_sur_1883}%
  \BibitemOpen
  \bibfield  {author} {\bibinfo {author} {\bibfnamefont {G.}~\bibnamefont {Floquet}},\ }\bibfield  {title} {\bibinfo {title} {Sur les équations différentielles linéaires à coefficients périodiques},\ }\href {https://doi.org/10.24033/asens.220} {\bibfield  {journal} {\bibinfo  {journal} {Annales scientifiques de l'École normale supérieure}\ }\textbf {\bibinfo {volume} {12}},\ \bibinfo {pages} {47} (\bibinfo {year} {1883})}\BibitemShut {NoStop}%
\bibitem [{\citenamefont {Schmidt}\ \emph {et~al.}(2019)\citenamefont {Schmidt}, \citenamefont {Schnack},\ and\ \citenamefont {Holthaus}}]{Schmidt_Floquet_2019}%
  \BibitemOpen
  \bibfield  {author} {\bibinfo {author} {\bibfnamefont {H.-J.}\ \bibnamefont {Schmidt}}, \bibinfo {author} {\bibfnamefont {J.}~\bibnamefont {Schnack}},\ and\ \bibinfo {author} {\bibfnamefont {M.}~\bibnamefont {Holthaus}},\ }\bibfield  {title} {\bibinfo {title} {Floquet theory of the analytical solution of a periodically driven two-level system},\ }\href {https://doi.org/10.1080/00036811.2019.1632439} {\bibfield  {journal} {\bibinfo  {journal} {Applicable Analysis}\ }\textbf {\bibinfo {volume} {100}},\ \bibinfo {pages} {1} (\bibinfo {year} {2019})}\BibitemShut {NoStop}%
\bibitem [{\citenamefont {Schrieffer}\ and\ \citenamefont {Wolff}(1966)}]{schrieffer_relation_1966}%
  \BibitemOpen
  \bibfield  {author} {\bibinfo {author} {\bibfnamefont {J.~R.}\ \bibnamefont {Schrieffer}}\ and\ \bibinfo {author} {\bibfnamefont {P.~A.}\ \bibnamefont {Wolff}},\ }\bibfield  {title} {\bibinfo {title} {Relation between the {Anderson} and {Kondo} {Hamiltonians}},\ }\href {https://doi.org/10.1103/PhysRev.149.491} {\bibfield  {journal} {\bibinfo  {journal} {Physical Review}\ }\textbf {\bibinfo {volume} {149}},\ \bibinfo {pages} {491} (\bibinfo {year} {1966})},\ \bibinfo {note} {publisher: American Physical Society}\BibitemShut {NoStop}%
\bibitem [{\citenamefont {Bravyi}\ \emph {et~al.}(2011)\citenamefont {Bravyi}, \citenamefont {DiVincenzo},\ and\ \citenamefont {Loss}}]{bravyi_schriefferwolff_2011}%
  \BibitemOpen
  \bibfield  {author} {\bibinfo {author} {\bibfnamefont {S.}~\bibnamefont {Bravyi}}, \bibinfo {author} {\bibfnamefont {D.~P.}\ \bibnamefont {DiVincenzo}},\ and\ \bibinfo {author} {\bibfnamefont {D.}~\bibnamefont {Loss}},\ }\bibfield  {title} {{\selectlanguage {english}\bibinfo {title} {Schrieffer–{Wolff} transformation for quantum many-body systems}},\ }\href {https://doi.org/10.1016/j.aop.2011.06.004} {\bibfield  {journal} {\bibinfo  {journal} {Annals of Physics}\ }\textbf {\bibinfo {volume} {326}},\ \bibinfo {pages} {2793} (\bibinfo {year} {2011})}\BibitemShut {NoStop}%
\bibitem [{\citenamefont {Goldin}\ and\ \citenamefont {Avishai}(2000)}]{goldin_nonlinear_2000}%
  \BibitemOpen
  \bibfield  {author} {\bibinfo {author} {\bibfnamefont {Y.}~\bibnamefont {Goldin}}\ and\ \bibinfo {author} {\bibfnamefont {Y.}~\bibnamefont {Avishai}},\ }\bibfield  {title} {{\selectlanguage {english}\bibinfo {title} {Nonlinear response of a {Kondo} system: {Perturbation} approach to the time-dependent {Anderson} impurity model}},\ }\href {https://doi.org/10.1103/PhysRevB.61.16750} {\bibfield  {journal} {\bibinfo  {journal} {Physical Review B}\ }\textbf {\bibinfo {volume} {61}},\ \bibinfo {pages} {16750} (\bibinfo {year} {2000})}\BibitemShut {NoStop}%
\bibitem [{\citenamefont {Theis}\ \emph {et~al.}(2018)\citenamefont {Theis}, \citenamefont {Motzoi}, \citenamefont {Machnes},\ and\ \citenamefont {Wilhelm}}]{theis_counteracting_2018}%
  \BibitemOpen
  \bibfield  {author} {\bibinfo {author} {\bibfnamefont {L.~S.}\ \bibnamefont {Theis}}, \bibinfo {author} {\bibfnamefont {F.}~\bibnamefont {Motzoi}}, \bibinfo {author} {\bibfnamefont {S.}~\bibnamefont {Machnes}},\ and\ \bibinfo {author} {\bibfnamefont {F.~K.}\ \bibnamefont {Wilhelm}},\ }\bibfield  {title} {\bibinfo {title} {Counteracting systems of diabaticities using {DRAG} controls: {The} status after 10 years $^{\textrm{(a)}}$},\ }\href {https://doi.org/10.1209/0295-5075/123/60001} {\bibfield  {journal} {\bibinfo  {journal} {EPL (Europhysics Letters)}\ }\textbf {\bibinfo {volume} {123}},\ \bibinfo {pages} {60001} (\bibinfo {year} {2018})}\BibitemShut {NoStop}%
\bibitem [{\citenamefont {Malekakhlagh}\ \emph {et~al.}(2020)\citenamefont {Malekakhlagh}, \citenamefont {Magesan},\ and\ \citenamefont {McKay}}]{malekakhlagh_first-principles_2020}%
  \BibitemOpen
  \bibfield  {author} {\bibinfo {author} {\bibfnamefont {M.}~\bibnamefont {Malekakhlagh}}, \bibinfo {author} {\bibfnamefont {E.}~\bibnamefont {Magesan}},\ and\ \bibinfo {author} {\bibfnamefont {D.~C.}\ \bibnamefont {McKay}},\ }\bibfield  {title} {\bibinfo {title} {First-principles analysis of cross-resonance gate operation},\ }\href {https://doi.org/10.1103/PhysRevA.102.042605} {\bibfield  {journal} {\bibinfo  {journal} {Physical Review A}\ }\textbf {\bibinfo {volume} {102}},\ \bibinfo {pages} {042605} (\bibinfo {year} {2020})},\ \bibinfo {note} {publisher: American Physical Society}\BibitemShut {NoStop}%
\bibitem [{\citenamefont {Malekakhlagh}\ \emph {et~al.}(2022)\citenamefont {Malekakhlagh}, \citenamefont {Magesan},\ and\ \citenamefont {Govia}}]{malekakhlagh_time-dependent_2022}%
  \BibitemOpen
  \bibfield  {author} {\bibinfo {author} {\bibfnamefont {M.}~\bibnamefont {Malekakhlagh}}, \bibinfo {author} {\bibfnamefont {E.}~\bibnamefont {Magesan}},\ and\ \bibinfo {author} {\bibfnamefont {L.~C.~G.}\ \bibnamefont {Govia}},\ }\bibfield  {title} {{\selectlanguage {english}\bibinfo {title} {Time-dependent {Schrieffer}-{Wolff}-{Lindblad} perturbation theory: {Measurement}-induced dephasing and second-order {Stark} shift in dispersive readout}},\ }\href {https://doi.org/10.1103/PhysRevA.106.052601} {\bibfield  {journal} {\bibinfo  {journal} {Physical Review A}\ }\textbf {\bibinfo {volume} {106}},\ \bibinfo {pages} {052601} (\bibinfo {year} {2022})}\BibitemShut {NoStop}%
\bibitem [{\citenamefont {Bloch}(1958)}]{bloch_sur_1958}%
  \BibitemOpen
  \bibfield  {author} {\bibinfo {author} {\bibfnamefont {C.}~\bibnamefont {Bloch}},\ }\bibfield  {title} {{\selectlanguage {french}\bibinfo {title} {Sur la théorie des perturbations des états liés}},\ }\href {https://doi.org/10.1016/0029-5582(58)90116-0} {\bibfield  {journal} {\bibinfo  {journal} {Nuclear Physics}\ }\textbf {\bibinfo {volume} {6}},\ \bibinfo {pages} {329} (\bibinfo {year} {1958})}\BibitemShut {NoStop}%
\bibitem [{\citenamefont {Lindgren}(1974)}]{lindgren_rayleigh-schrodinger_1974}%
  \BibitemOpen
  \bibfield  {author} {\bibinfo {author} {\bibfnamefont {I.}~\bibnamefont {Lindgren}},\ }\bibfield  {title} {\bibinfo {title} {The {Rayleigh}-{Schrodinger} perturbation and the linked-diagram theorem for a multi-configurational model space},\ }\href {https://doi.org/10.1088/0022-3700/7/18/010} {\bibfield  {journal} {\bibinfo  {journal} {Journal of Physics B: Atomic and Molecular Physics}\ }\textbf {\bibinfo {volume} {7}},\ \bibinfo {pages} {2441} (\bibinfo {year} {1974})}\BibitemShut {NoStop}%
\bibitem [{\citenamefont {Durand}(1983)}]{durand_direct_1983}%
  \BibitemOpen
  \bibfield  {author} {\bibinfo {author} {\bibfnamefont {P.}~\bibnamefont {Durand}},\ }\bibfield  {title} {{\selectlanguage {english}\bibinfo {title} {Direct determination of effective {Hamiltonians} by wave-operator methods. {I}. {General} formalism}},\ }\href {https://doi.org/10.1103/PhysRevA.28.3184} {\bibfield  {journal} {\bibinfo  {journal} {Physical Review A}\ }\textbf {\bibinfo {volume} {28}},\ \bibinfo {pages} {3184} (\bibinfo {year} {1983})}\BibitemShut {NoStop}%
\bibitem [{\citenamefont {Jolicard}(1987)}]{jolicard_effective_1987}%
  \BibitemOpen
  \bibfield  {author} {\bibinfo {author} {\bibfnamefont {G.}~\bibnamefont {Jolicard}},\ }\bibfield  {title} {{\selectlanguage {english}\bibinfo {title} {Effective hamiltonian theory: an intermediate representation method for the wave operator calculation}},\ }\href {https://doi.org/10.1016/0301-0104(87)80178-7} {\bibfield  {journal} {\bibinfo  {journal} {Chemical Physics}\ }\textbf {\bibinfo {volume} {115}},\ \bibinfo {pages} {57} (\bibinfo {year} {1987})}\BibitemShut {NoStop}%
\bibitem [{\citenamefont {Jolicard}\ and\ \citenamefont {Austin}(1991)}]{jolicard_recursive_1991}%
  \BibitemOpen
  \bibfield  {author} {\bibinfo {author} {\bibfnamefont {G.}~\bibnamefont {Jolicard}}\ and\ \bibinfo {author} {\bibfnamefont {E.}~\bibnamefont {Austin}},\ }\bibfield  {title} {\bibinfo {title} {On the recursive integration of stationary and time-dependent {Bloch} wave operator equations},\ }\href {https://doi.org/https://doi.org/10.1016/0009-2614(91)85001-D} {\bibfield  {journal} {\bibinfo  {journal} {Chemical Physics Letters}\ }\textbf {\bibinfo {volume} {180}},\ \bibinfo {pages} {503} (\bibinfo {year} {1991})}\BibitemShut {NoStop}%
\bibitem [{\citenamefont {Jolicard}\ \emph {et~al.}(1999)\citenamefont {Jolicard}, \citenamefont {Killingbeck}, \citenamefont {Grosjean},\ and\ \citenamefont {Zucconi}}]{jolicard_new_1999}%
  \BibitemOpen
  \bibfield  {author} {\bibinfo {author} {\bibfnamefont {G.}~\bibnamefont {Jolicard}}, \bibinfo {author} {\bibfnamefont {J.~P.}\ \bibnamefont {Killingbeck}}, \bibinfo {author} {\bibfnamefont {A.}~\bibnamefont {Grosjean}},\ and\ \bibinfo {author} {\bibfnamefont {J.-M.}\ \bibnamefont {Zucconi}},\ }\bibfield  {title} {{\selectlanguage {english}\bibinfo {title} {A new time-dependent wave operator approach to the internal eigenstate problems for large matrices}},\ }\href {https://doi.org/10.1063/1.480052} {\bibfield  {journal} {\bibinfo  {journal} {The Journal of Chemical Physics}\ }\textbf {\bibinfo {volume} {111}},\ \bibinfo {pages} {7272} (\bibinfo {year} {1999})}\BibitemShut {NoStop}%
\bibitem [{\citenamefont {Jolicard}\ and\ \citenamefont {Killingbeck}(2003)}]{jolicard_bloch_2003}%
  \BibitemOpen
  \bibfield  {author} {\bibinfo {author} {\bibfnamefont {G.}~\bibnamefont {Jolicard}}\ and\ \bibinfo {author} {\bibfnamefont {J.~P.}\ \bibnamefont {Killingbeck}},\ }\bibfield  {title} {\bibinfo {title} {The {Bloch} wave operator: generalizations and applications: {II}. {The} time-dependent case},\ }\href {https://doi.org/10.1088/0305-4470/36/40/R01} {\bibfield  {journal} {\bibinfo  {journal} {Journal of Physics A: Mathematical and General}\ }\textbf {\bibinfo {volume} {36}},\ \bibinfo {pages} {R411} (\bibinfo {year} {2003})}\BibitemShut {NoStop}%
\bibitem [{\citenamefont {Killingbeck}\ and\ \citenamefont {Jolicard}(2003)}]{killingbeck_bloch_2003}%
  \BibitemOpen
  \bibfield  {author} {\bibinfo {author} {\bibfnamefont {J.~P.}\ \bibnamefont {Killingbeck}}\ and\ \bibinfo {author} {\bibfnamefont {G.}~\bibnamefont {Jolicard}},\ }\bibfield  {title} {\bibinfo {title} {The {Bloch} wave operator: generalizations and applications: {Part} {I}. {The} time-independent case},\ }\href {https://doi.org/10.1088/0305-4470/36/20/201} {\bibfield  {journal} {\bibinfo  {journal} {Journal of Physics A: Mathematical and General}\ }\textbf {\bibinfo {volume} {36}},\ \bibinfo {pages} {R105} (\bibinfo {year} {2003})}\BibitemShut {NoStop}%
\bibitem [{\citenamefont {Nenciu}(1993)}]{nenciu_linear_1993}%
  \BibitemOpen
  \bibfield  {author} {\bibinfo {author} {\bibfnamefont {G.}~\bibnamefont {Nenciu}},\ }\bibfield  {title} {{\selectlanguage {english}\bibinfo {title} {Linear adiabatic theory. {Exponential} estimates}},\ }\href {https://doi.org/10.1007/BF02096616} {\bibfield  {journal} {\bibinfo  {journal} {Communications in Mathematical Physics}\ }\textbf {\bibinfo {volume} {152}},\ \bibinfo {pages} {479} (\bibinfo {year} {1993})}\BibitemShut {NoStop}%
\bibitem [{\citenamefont {Burgarth}\ \emph {et~al.}(2019)\citenamefont {Burgarth}, \citenamefont {Facchi}, \citenamefont {Nakazato}, \citenamefont {Pascazio},\ and\ \citenamefont {Yuasa}}]{burgarth_generalized_2019}%
  \BibitemOpen
  \bibfield  {author} {\bibinfo {author} {\bibfnamefont {D.}~\bibnamefont {Burgarth}}, \bibinfo {author} {\bibfnamefont {P.}~\bibnamefont {Facchi}}, \bibinfo {author} {\bibfnamefont {H.}~\bibnamefont {Nakazato}}, \bibinfo {author} {\bibfnamefont {S.}~\bibnamefont {Pascazio}},\ and\ \bibinfo {author} {\bibfnamefont {K.}~\bibnamefont {Yuasa}},\ }\bibfield  {title} {{\selectlanguage {english}\bibinfo {title} {Generalized {Adiabatic} {Theorem} and {Strong}-{Coupling} {Limits}}},\ }\href {https://doi.org/10.22331/q-2019-06-12-152} {\bibfield  {journal} {\bibinfo  {journal} {Quantum}\ }\textbf {\bibinfo {volume} {3}},\ \bibinfo {pages} {152} (\bibinfo {year} {2019})}\BibitemShut {NoStop}%
\bibitem [{\citenamefont {Gaeta}(2022)}]{gaeta_perturbation_2022}%
  \BibitemOpen
  \bibinfo {editor} {\bibfnamefont {G.}~\bibnamefont {Gaeta}},\ ed.,\ \href@noop {} {{\selectlanguage {english}\emph {\bibinfo {title} {Perturbation theory: mathematics, methods and applications}}}}\ (\bibinfo  {publisher} {Springer},\ \bibinfo {address} {New York, NY},\ \bibinfo {year} {2022})\ \bibinfo {note} {oCLC: 1357257929}\BibitemShut {NoStop}%
\bibitem [{\citenamefont {Abou-Kandil}\ \emph {et~al.}(2003)\citenamefont {Abou-Kandil}, \citenamefont {Freiling}, \citenamefont {Ionescu},\ and\ \citenamefont {Jank}}]{abou-kandil_matrix_2003}%
  \BibitemOpen
  \bibfield  {author} {\bibinfo {author} {\bibfnamefont {H.}~\bibnamefont {Abou-Kandil}}, \bibinfo {author} {\bibfnamefont {G.}~\bibnamefont {Freiling}}, \bibinfo {author} {\bibfnamefont {V.}~\bibnamefont {Ionescu}},\ and\ \bibinfo {author} {\bibfnamefont {G.}~\bibnamefont {Jank}},\ }\href@noop {} {\emph {\bibinfo {title} {Matrix Riccati equations: in control and systems theory}}},\ edited by\ \bibinfo {editor} {\bibfnamefont {H.}~\bibnamefont {Abou-Kandil}},\ Systems \& control\ (\bibinfo  {publisher} {Birkhäuser},\ \bibinfo {year} {2003})\BibitemShut {NoStop}%
\bibitem [{\citenamefont {Jolicard}(1995)}]{jolicard_effective_1995}%
  \BibitemOpen
  \bibfield  {author} {\bibinfo {author} {\bibfnamefont {G.}~\bibnamefont {Jolicard}},\ }\bibfield  {title} {{\selectlanguage {english}\bibinfo {title} {Effective {Hamiltonian} {Theory} and {Molecular} {Dynamics}}},\ }\href {https://doi.org/10.1146/annurev.pc.46.100195.000503} {\bibfield  {journal} {\bibinfo  {journal} {Annual Review of Physical Chemistry}\ }\textbf {\bibinfo {volume} {46}},\ \bibinfo {pages} {83} (\bibinfo {year} {1995})}\BibitemShut {NoStop}%
\bibitem [{\citenamefont {Zeener}(1932)}]{zeener_non-adiabatic_1932}%
  \BibitemOpen
  \bibfield  {author} {\bibinfo {author} {\bibfnamefont {C.}~\bibnamefont {Zeener}},\ }\bibfield  {title} {\bibinfo {title} {Non-adiabatic crossing of energy levels},\ }\href {https://doi.org/10.1098/rspa.1932.0165} {\bibfield  {journal} {\bibinfo  {journal} {Proceedings of the Royal Society of London. Series A, Containing Papers of a Mathematical and Physical Character}\ }\textbf {\bibinfo {volume} {137}},\ \bibinfo {pages} {696} (\bibinfo {year} {1932})}\BibitemShut {NoStop}%
\bibitem [{\citenamefont {Ho}\ and\ \citenamefont {Chibotaru}(2014)}]{ho_simple_2014}%
  \BibitemOpen
  \bibfield  {author} {\bibinfo {author} {\bibfnamefont {L.~T.~A.}\ \bibnamefont {Ho}}\ and\ \bibinfo {author} {\bibfnamefont {L.~F.}\ \bibnamefont {Chibotaru}},\ }\bibfield  {title} {\bibinfo {title} {A simple derivation of the landau–zener formula},\ }\href {https://doi.org/10.1039/C4CP00262H} {\bibfield  {journal} {\bibinfo  {journal} {Physical Chemistry Chemical Physics}\ }\textbf {\bibinfo {volume} {16}},\ \bibinfo {pages} {6942} (\bibinfo {year} {2014})},\ \bibinfo {note} {publisher: The Royal Society of Chemistry}\BibitemShut {NoStop}%
\bibitem [{\citenamefont {Matus}\ \emph {et~al.}(2023)\citenamefont {Matus}, \citenamefont {Střeleček},\ and\ \citenamefont {Cejnar}}]{matus_analytic_2023}%
  \BibitemOpen
  \bibfield  {author} {\bibinfo {author} {\bibfnamefont {F.}~\bibnamefont {Matus}}, \bibinfo {author} {\bibfnamefont {J.}~\bibnamefont {Střeleček}},\ and\ \bibinfo {author} {\bibfnamefont {P.}~\bibnamefont {Cejnar}},\ }\bibfield  {title} {\bibinfo {title} {Analytic approach to the {Landau}–{Zener} problem in bounded parameter space},\ }\href {https://doi.org/10.1088/1751-8121/accf4f} {\bibfield  {journal} {\bibinfo  {journal} {Journal of Physics A: Mathematical and Theoretical}\ }\textbf {\bibinfo {volume} {56}},\ \bibinfo {pages} {235303} (\bibinfo {year} {2023})}\BibitemShut {NoStop}%
\bibitem [{dlm()}]{dlmf}%
  \BibitemOpen
  \href@noop {} {\bibinfo {title} {Nist digital library of mathematical functions}},\ \bibinfo {howpublished} {dlmf.nist.gov/12.8},\ \bibinfo {note} {accessed: 2021-09-30}\BibitemShut {NoStop}%
\bibitem [{\citenamefont {Liebermann}\ and\ \citenamefont {Wilhelm}(2016)}]{liebermann_optimal_2016}%
  \BibitemOpen
  \bibfield  {author} {\bibinfo {author} {\bibfnamefont {P.~J.}\ \bibnamefont {Liebermann}}\ and\ \bibinfo {author} {\bibfnamefont {F.~K.}\ \bibnamefont {Wilhelm}},\ }\bibfield  {title} {{\selectlanguage {english}\bibinfo {title} {Optimal {Qubit} {Control} {Using} {Single}-{Flux} {Quantum} {Pulses}}},\ }\href {https://doi.org/10.1103/PhysRevApplied.6.024022} {\bibfield  {journal} {\bibinfo  {journal} {Physical Review Applied}\ }\textbf {\bibinfo {volume} {6}},\ \bibinfo {pages} {024022} (\bibinfo {year} {2016})}\BibitemShut {NoStop}%
\bibitem [{\citenamefont {Abou-Kandil}(2003)}]{riccati-book}%
  \BibitemOpen
  \bibfield  {author} {\bibinfo {author} {\bibfnamefont {H.}~\bibnamefont {Abou-Kandil}},\ }\href@noop {} {{\selectlanguage {english}\emph {\bibinfo {title} {Matrix Riccati Equations in Control and Systems Theory}}}},\ \bibinfo {edition} {1st}\ ed.,\ Systems \& Control: Foundations \& Applications\ (\bibinfo  {publisher} {Birkhäuser Basel},\ \bibinfo {address} {Basel},\ \bibinfo {year} {2003})\BibitemShut {NoStop}%
\end{thebibliography}%

\end{document}